\definecolor{lightblue}{rgb}{0.5,0.5,1.0}
\definecolor{darkred}{rgb}{0.5,0,0}
\definecolor{darkgreen}{rgb}{0,0.5,0}
\definecolor{darkblue}{rgb}{0,0,0.5}
\newcommand{\Traces}{\textsc{Traces}}
\newcommand{\nauty}{\textsc{nauty}}
\newcommand{\bliss}{\textsc{bliss}}
\newcommand{\problemdef}[2]{\noindent\textbf{Problem {#1}} ({#2})\textbf{.}}
\DeclareMathOperator{\inj}{Inj}   
\DeclareMathOperator{\Col}{col}
\DeclarePairedDelimiter{\ceil}{\lceil}{\rceil}
\newtheorem{lemma}{Lemma}
\newtheorem{corollary}[lemma]{Corollary}
\newtheorem{theorem}[lemma]{Theorem}
\newtheorem*{invarax}{Axiom}
\title{Search Problems in Trees with Symmetries:\\ \Large{near optimal traversal strategies for\\ individualization-refinement algorithms}}
\author{Markus Anders and Pascal Schweitzer\\ TU Kaiserslautern}
\newcommand\blfootnote[1]{%
  \begingroup
  \renewcommand\thefootnote{}\footnote{#1}%
  \addtocounter{footnote}{-1}%
  \endgroup
}
\begin{document}

\maketitle

\begin{abstract}
We define a search problem on trees that closely captures the backtracking behavior of all current practical graph isomorphism algorithms. Given two trees with colored leaves, the goal is to find two leaves of matching color, one in each of the trees. The trees are subject to an invariance property which promises that for every pair of leaves of equal color there must be a symmetry (or an isomorphism) that maps one leaf to the other.

We describe a randomized algorithm with errors for which the number of visited leaves is quasilinear in the square root of the size of the smaller of the two trees. For inputs of bounded degree, we develop a Las Vegas algorithm with a similar running time.

We prove that these results are optimal up to logarithmic factors. We show a lower bound for randomized algorithms on inputs of bounded degree that is  the square root of the tree sizes. For inputs of unbounded degree, we show a linear lower bound for Las Vegas algorithms. For deterministic algorithms we can prove a linear bound even for inputs of bounded degree. This shows why randomized algorithms outperform deterministic ones.

Our results explain why the randomized ``breadth-first with intermixed experimental path'' search strategy of the isomorphism tool \Traces{} (Piperno 2008) is often superior to the depth-first search strategy of other tools such as \nauty{} (McKay 1977) or \bliss{} (Junttila, Kaski 2007). However, our algorithm also provides a new traversal strategy, which is theoretically near optimal with better worst case behavior than traversal strategies that have previously been used.
\end{abstract}\blfootnote{The research leading to these results has received funding from the European Research Council (ERC) under the European Union's Horizon 2020 research and innovation programme (EngageS: grant agreement No.~{820148}).}

\thispagestyle{empty}

\newpage
\setcounter{page}{1}

\section{Introduction}
We define a new search problem involving trees with symmetries. In this problem, two unknown trees are given as input and they can be gradually explored. The leaves of the trees are colored and the task is to find a pair of leaves, one in each tree, with matching colors or determine that such a pair does not exist. The crucial element that distinguishes our model from standard exploration tasks is that the color of the leaves allows us to draw conclusions about the local surroundings of the leaf. More precisely, there is an invariance axiom guaranteed to always hold. It says that if two leaves are of the same color then there is a symmetry (an automorphism or an isomorphism depending on the leaves being in the same tree or not) that maps one leaf to the other. 

The invariance property guarantees that the local neighborhood around the leaf is structurally the same as the neighborhood around other leaves of the same color. This allows us to discard unexplored parts of the search tree thereby opening the possibility of having algorithms that explore only a sublinear number of the nodes of the trees.

Our motivation behind the model lies in the desire for a theoretical analysis of practical solvers for the graph isomorphism problem,  automorphism group computations and graph canonization. While the best theoretical algorithms, such as Babai's quasipolynomial time algorithm~\cite{DBLP:conf/stoc/Babai16}, are based on algorithmic group theory, practical solvers \cite{saucy:webpage,JunttilaKaski:ALENEX2007, conauto:webpage, McKay:userguide,  DBLP:journals/corr/abs-0804-4881} exclusively follow the individualization-refinement (IR) paradigm.
First introduced into the realm of practical isomorphism testing and canonization by McKay in 1977~\cite{McKay81practicalgraph}, the basic principle has remained unchanged to date.
Indeed, all algorithms in this paradigm perform some form of backtracking, implicitly creating a recursion tree for each input graph. 
The tools solve the graph isomorphism problem by finding two leaves in this recursion tree that correspond to each other. The number of nodes in the recursion tree that are actually called during the execution is closely linked to the running time of the overall algorithm.
Despite being simple, our search problem and the exploration model capture precisely the task needed to be solved and assess correctly the running times of solutions.

Traditionally, IR-tools have followed a depth-first search approach to traverse the search tree. However, in 2008, Piperno~\cite{DBLP:journals/corr/abs-0804-4881} introduced his tool \Traces{}, which broke away from this principle, performing a form of breadth-first search that is intermixed with random traversal of the tree (so-called experimental paths). The traversal strategy is at the very heart of the underlying algorithm.
Significantly outperforming all the other tools on most practical inputs~\cite{McKay201494}, \Traces{} revealed that the traversal strategy is arguably the most important design choice in IR algorithms. This immediately raises the question whether there are theoretical, structural reasons why this traversal strategy is favorable. Going one step further, we can ask for optimal traversal strategies. 

However, so far there has been no rigorous justification as to why one traversal strategy should be superior to another and in particular there is no research into optimal traversal strategies.

\paragraph{Contribution.}
The introduction of our particular search problem in trees with symmetry allows us to strip away all the other design choices that have to be made in the creation of an IR-algorithm and isolates the core issue of the traversal strategy in the search tree.
The ultimate goal is twofold. Firstly, to provide a more rigorous, theoretical foundation for the design of practical graph isomorphism tools. Secondly, to design novel, near optimal strategies to be adopted in future generations of practical solvers. 

An input consists of two trees without vertices that only have one child. 
Let $n$ denote the size of the smaller one of the two trees and $N$ the size of the larger one. 
The cost of our algorithms is measured in the number of nodes that are explored. For our algorithms, the terms ``running time'' or ``complexity'' refer to this cost measure.

Regarding \emph{upper bounds}, we provide a simple randomized Monte Carlo algorithm with an upper bound of~$\mathcal{O}(\sqrt{n} \log n)$ explored nodes. 
For trees of bounded degree we design a more complicated algorithm achieving an upper bound of~$\mathcal{O}( \sqrt{n} \log N)$ for Las Vegas algorithms (i.e., randomized algorithms without errors). 

These algorithms are accompanied by nearly matching \emph{lower bounds}, showing that~$\Omega(\sqrt{n})$ nodes need to be explored for randomized Monte Carlo algorithms even on bounded degree trees and that for unbounded degree inputs, Las Vegas algorithms need to visit $\Omega(n)$ nodes in expectation. 
For deterministic algorithms we get a lower bound of $\Omega(n)$ even for inputs of bounded degree.

Figure~\ref{fig:bounds} provides an overview of the lower and upper bounds proven in this paper.

Overall this shows that the new traversal strategies are optimal up to logarithmic factors. 
However, it also shows that randomized traversal strategies, even those without error, asymptotically outperform deterministic ones.

The algorithms for proving the upper bounds immediately imply individualization-refinement algorithms with the same runtimes (up to almost linear factors in the order of the graphs for non-recursive work). We should emphasize that our new upper bounds asymptotically outperform the traversal strategies that are currently being used in practice in the worst case.

\begin{figure}
\centering
\begin{tabularx}{0.73\linewidth}{|l|l|l|X|}
	\hline
	\footnotesize \textbf{Setting}     & \footnotesize \textbf{Lower Bound}   & \footnotesize \textbf{Lower Bound ($d$-adic)} & \footnotesize \textbf{Upper Bound} \\\hline
	\footnotesize Monte Carlo & \footnotesize $\Omega(\sqrt{n})$ & \footnotesize $\Omega(\sqrt{n})$ 
	& \footnotesize $\mathcal{O}(\log(n)\sqrt{n})$ \\\hline
	\footnotesize Las Vegas     & \footnotesize $\Omega(n)$ & \footnotesize $\Omega(\sqrt{n})$ 
	& \footnotesize $\mathcal{O}(d \log(N) \sqrt{n})$\\\hline
	\footnotesize Deterministic & \footnotesize $\Omega(n)$ & \footnotesize $\Omega(n)$ & \footnotesize $\mathcal{O}(n)$ \\\hline
\end{tabularx}
\caption{This table summarizes lower and upper bounds for the isomorphism problem implied by the results of this paper. Here~$n= \min\{n_1,n_2\}$ and~$N= \max\{n_1,n_2\}$, where the sizes of the trees are $n_1, n_2$ and $d$ gives the maximum degree of the two input trees.
We state separate lower bounds for trees with bounded ($d$-adic) and unbounded degree.} \label{fig:bounds}
\end{figure}

\section{A Model for Tree Exploration with Symmetries}

This section presents the exploration model that is used throughout this paper. 
The model enables us to perform a focused analysis of the traversal strategies used in the search trees of individualization-refinement algorithms.
We state the model independent of any discussion regarding individualization-refinement.
In Section~\ref{sec:motivation} we explain why the model captures the running time of individualization-refinement algorithms.

\subsection{Black Box Search Trees}
In our search problems the input consists of one or two hidden trees of which certain information is to be discovered. We first explain how the trees can be explored.

\paragraph{Exploration Model.} 

We consider rooted trees in which there is a priori no bound on the degree of the vertices. However, we require that no vertex has exactly one child. Furthermore, the leaves of the tree are colored.

Our exploration model for the trees restricts access of algorithms to the trees themselves and how they can be explored. We think of new information as being provided by an oracle to the exploration algorithm.

During execution, a node of the tree is either \emph{explored} or \emph{unexplored}.
Whenever a new node is explored, the algorithm learns the number of children of the node. In particular the algorithm knows whether the node is a leaf or not. Furthermore, in case the node is a leaf, it learns its color.

At the beginning of an execution, everything except the root is deemed unexplored. The algorithm can only ever access previously explored nodes.
The degree (i.e., the number of children) of an explored node $v$, which we denote by $\deg(v)$, is always known.
To explore further nodes, the algorithm can explore a child of a previously explored node.
Specifically, the algorithm can request the~$i$-th child of~$v$ with~$i\in \{1,\ldots,d(v)\}$, which thereby becomes explored. For this the input has an arbitrary but fixed ordering for the children of each vertex.

The \emph{cost} of the exploration is measured in the number of oracle accesses, i.e., the number of nodes that are ever visited by the algorithm. (In particular there is no cost for traversing previously explored parts of the tree.)
 
Figure~\ref{fig:blackboxex} illustrates such an exploration of a tree. 
Note that while the algorithm always knows the degree of explored nodes, it is essentially unable to chose a new specific child to explore since in another input the ordering of the children may be different.

More formally, a black box search tree $T=(V, E, \Col)$ consists of a rooted tree with colored leaves and for each node an ordering of the children. We omit the orderings from the notation. Of course all choices of orderings lead to proper search trees.
The function $\Col\colon L(T) \to \mathbb{N}$ maps the \emph{leaves of the tree} $L(T)$ to natural numbers, which we will refer to as colors.

In our algorithms, we use the procedure $\NextChild \colon V \to V \cup \{\bot\}$ to explore the tree, which agrees with the previous description as follows. For an explored vertex $v$ the algorithm chooses the smallest index of a previously unexplored child of $v$ and queries the oracle for that child. If no unexplored child exists, the function returns $\bot$.  

In the description of randomized algorithms, we also use the function $\RandomChild\colon  V \to V\cup \{\bot\}$, which returns a child chosen uniformly at random among all children of $v$, which means that it can in particular return previously explored children. 

\newcommand\eT{T}

\newcommand\ranelem{\stackrel{\mathclap{\normalfont\tiny\mbox{R}}}{\in}}

\begin{figure}
	\centering
	\begin{tikzpicture}[
		every node/.style = {minimum width = 0.7em, inner sep = 0, outer sep = 0, draw, circle, fill=gray!50},
		level/.style = {sibling distance = 11mm/#1, level distance=7mm}
		]
		\node[thick, draw=black, fill=white] {}
		child {	node[draw=none, fill=none] {}
				child {	node[draw=none, fill=none] {}
				edge from parent[draw=none]}
				child {	node[draw=none, fill=none] {}
				edge from parent[draw=none]}
		}
		child {	node[draw=none, fill=none] {}
				child {	node[draw=none, fill=none] {}
				edge from parent[draw=none]}
				child {	node[draw=none, fill=none] {}
				edge from parent[draw=none]}
		};
	\end{tikzpicture} \hspace{0.2cm}
	\begin{tikzpicture}[
		every node/.style = {minimum width = 0.7em, inner sep = 0, outer sep = 0, draw, circle, fill=gray!50},
		level/.style = {sibling distance = 11mm/#1, level distance=7mm}
		]
		\node[thick, fill=white] {}
		child {	node[draw=none, fill=none] {}
				child {	node[draw=none, fill=none] {}
				edge from parent[draw=none]}
				child {	node[draw=none, fill=none] {}
				edge from parent[draw=none]}
		}
		child {	node[thick, fill=white] {}
				child {	node[fill=none, draw=none, line width=0.4pt] {}
						edge from parent[line width=0.4pt]
				}
				child {	node[fill=none, draw=none, line width=0.4pt] {}
						edge from parent[line width=0.4pt]
				}
				edge from parent[thick]
		};
	\end{tikzpicture} \hspace{0.2cm}
	\begin{tikzpicture}[
		every node/.style = {minimum width = 0.7em, inner sep = 0, outer sep = 0, draw, circle, fill=gray!50},
		level/.style = {sibling distance = 11mm/#1, level distance=7mm}
		]
		\node[thick, fill=white] {}
		child {	node[draw=none, fill=none] {}
				child {	node[draw=none, fill=none] {}
				edge from parent[draw=none]}
				child {	node[draw=none, fill=none] {}
				edge from parent[draw=none]}
		}
		child {	node[thick, fill=white] {}
				child {	node[fill=none, draw=none, line width=0.4pt] {}
						edge from parent[line width=0.4pt]
				}
				child {	node[fill=orange,thick] {}
						edge from parent[thick]
				}
				edge from parent[thick]
		};
	\end{tikzpicture} \hspace{0.2cm}
	\begin{tikzpicture}[
		every node/.style = {minimum width = 0.7em, inner sep = 0, outer sep = 0, draw, circle, fill=gray!50},
		level/.style = {sibling distance = 11mm/#1, level distance=7mm}
		]
		\node[thick, fill=white] {}
		child {	node[thick, fill=white] {}
				child {	node[draw=none, fill=none] {}
				edge from parent[line width=0.4pt]}
				child {	node[draw=none, fill=none] {}
				edge from parent[line width=0.4pt]}
				edge from parent[thick]
		}
		child {	node[thick, fill=white] {}
				child {	node[fill=none, draw=none, line width=0.4pt] {}
						edge from parent[line width=0.4pt]
				}
				child {	node[fill=orange, thick] {}
						edge from parent[thick]
				}
				edge from parent[thick]
		};
	\end{tikzpicture} \hspace{0.2cm}
	\begin{tikzpicture}[
		every node/.style = {minimum width = 0.7em, inner sep = 0, outer sep = 0, draw, circle, fill=gray!50},
		level/.style = {sibling distance = 11mm/#1, level distance=7mm}
		]
		\node[thick, fill=white] {}
		child {	node[thick, fill=white] {}
				child {	node[draw=none, fill=none] {}
				edge from parent[line width=0.4pt]}
				child {	node[fill=lightblue] {}
				edge from parent[thick]}
				edge from parent[thick]
		}
		child {	node[thick, fill=white] {}
				child {	node[fill=none, draw=none, line width=0.4pt] {}
						edge from parent[line width=0.4pt]
				}
				child {	node[fill=orange, thick] {}
						edge from parent[thick]
				}
				edge from parent[thick]
		};
	\end{tikzpicture} \hspace{0.2cm}
	\begin{tikzpicture}[
		every node/.style = {minimum width = 0.7em, inner sep = 0, outer sep = 0, draw, circle, fill=gray!50},
		level/.style = {sibling distance = 11mm/#1, level distance=7mm}
		]
		\node[thick, fill=white] {}
		child {	node[thick, fill=white] {}
				child {	node[thick, fill=orange] {}
				edge from parent[thick]}
				child {	node[fill=lightblue] {}
				edge from parent[thick]}
				edge from parent[thick]
		}
		child {	node[thick, fill=white] {}
				child {	node[fill=none, draw=none, line width=0.4pt] {}
						edge from parent[line width=0.4pt]
				}
				child {	node[fill=orange, thick] {}
						edge from parent[thick]
				}
				edge from parent[thick]
		};
	\end{tikzpicture}
	\caption{Example exploration in the black box search tree model. Starting from the root, the algorithm only ever knows explored nodes and their degrees. Through the use of an oracle, random children of explored nodes may then be queried.} \label{fig:blackboxex}
\end{figure}

\paragraph{Isomorphism Invariance.} So far we are lacking the crucial part of the model, namely symmetries. 
The core property of our trees is that the presence of leaves with equal colors implies the existence of symmetries of the trees. 
More specifically, they imply \emph{color-preserving isomorphisms}, defined as follows.
An isomorphism $\varphi$ between two trees $T_1$ and $T_2$ is a bijection on vertices $\varphi\colon V(T_1) \to V(T_2)$, such that $v$ is a child of $v'$ if and only if $\varphi(v)$ is a child of $\varphi(v')$.
A color-preserving isomorphism furthermore requires that $\Col(l) = \Col(\varphi(l))$ holds for all leaves $l \in L(V_1)$. This implies that leaves of a color can only be mapped to leaves of that same color. If $T_1 = T_2$ we also call $\varphi$ an \emph{automorphism} or a symmetry.

The crucial property that we require for all black box search trees is that whenever two leaves have the same color, we can derive an isomorphism:

\begin{invarax}[Complete Isomorphism Invariance] \label{prop:auto} If $l_1 \in T_1, l_2 \in T_2$ and $\Col(l_1) = \Col(l_2)$, then there exists a color-preserving isomorphism $\varphi\colon V(T_1) \to V(T_2)$ such that $\varphi(l_1) = l_2$.  
\end{invarax}

We should highlight that the axiom in particular has to hold for the case $T_1 = T_2$, yielding automorphisms (possibly the identity if~$l_1=l_2$).

The crucial consequence of the axiom is that it allows us to draw conclusions about the structure of unexplored parts of the search tree. For example, applying this knowledge enables us to conclude that the last remaining node of Figure~\ref{fig:blackboxex} is blue.

In the rest of the paper, we assume that all inputs, may they consist of one or two trees, adhere to this axiom. Also, all exploration algorithms operate in the exploration model we defined.

\paragraph{Isomorphism  Exploration Problem.} We are now ready to the state our problem of interest for black box search trees: the isomorphism exploration problem. 

\medskip

\problemdef{}{Isomorphism exploration} Given two search trees $T_1$ and $T_2$, compute leaves $l_1 \in T_1$ and $l_2 \in T_2$ with $\Col(l_1) = \Col(l_2)$, if they exist and return $\bot$ otherwise.
\medskip

For simplicity we will always assume that the trees are disjoint, that is $V(T_1) \cap V(T_2) = \varnothing$. This way we do not need to specify for oracle queries what tree they relate to. There are other interesting problems, such as finding two leaves of the same color within one tree, that can be defined within the model. We discuss them and their relationship to the problem just defined in Section~\ref{sec:motivation}. 

While the tree model and the exploration problem can be defined it their own right, we have a concrete motivation behind the definitions. Specifically, the motivation behind the specifics of our model lies in so-called \emph{individualization-refinement} algorithms, the prevailing method to solve the graph isomorphism problem in practice. In fact, the isomorphism exploration problem captures very closely the runtime of these algorithms. A more detailed explanation of this is differed to the end of the paper in Section~\ref{sec:motivation}.

However, let us briefly remark that in our context the requirement that inner nodes may not have exactly one child is not only natural for our intended application but also crucial. If we drop this requirement the nature of the problem changes dramatically and ray searching as well as doubling techniques become relevant (see~\cite{DBLP:journals/sigact/ChrobakK06} for further pointers).

\section{Upper Bounds} \label{sec:upperbound} 
\noindent We provide upper bounds for the isomorphism problem by developing appropriate algorithms. Of course by querying the oracle for the input trees alternatingly, there is an obvious deterministic algorithm with a complexity of~$O(\min\{{|T_1|}, {|T_2|}\})$.
For randomized algorithms, we start with a simple Monte Carlo algorithm.
Subsequently we argue that there is also a Las Vegas algorithm, (i.e., an algorithm that always answers correctly) which still has a good expected runtime if there is a modest bound on the maximum degree of the tree.

\subsection{Probabilistic Bidirectional Search} \label{sec:upperbound:probabilistic}

\begin{algorithm}[t] \label{alg:random_walk}
	\SetAlgoLined
	\SetAlgoNoEnd
	\caption[Random Walk of the Search Tree]{Random Walk of the Search Tree}
	\Fn{\RandomWalk{$v$}}{
		\SetKwInOut{Input}{Input}
		\SetKwInOut{Output}{Output}
		\Input{vertex $v$ of a black box search tree}
		\Output{a random leaf of the search tree rooted at $v$}
		\While{$\deg(v) \neq 0$}{
			$v$ := \RandomChild{$v$}\;
		}
		\Return{$v$}\;
	}
\end{algorithm}

The central idea of the probabilistic isomorphism test discussed in this section is to perform \emph{random root to leaf walks} in the black box search trees. 
The recursive procedure for such walks is described in Algorithm~\ref{alg:random_walk} and simply works as follows:
a random walk is performed by starting in the root node and repeatedly choosing uniformly at random a child of the current node, until a leaf is reached.

Repeatedly executing random walks, the probabilistic isomorphism test exploits the following observation: assume we have two isomorphic trees $T_1$, $T_2$. 
Further assume we fix some leaf $l \in T_1$. We call all leaves $l'$ with $\Col(l) = \Col(l')$ \emph{occurrences} of $l$. 
The algorithm tries to find occurrences of $l$ through random walks of the trees. 
Towards finding $l$, we always perform two random walks, one in $T_1$ and one in $T_2$. 
Since we assumed the trees \emph{are isomorphic}, we are \emph{equally likely} to find an occurrence of $l$ in $T_1$ or in $T_2$. But if the trees \emph{are not isomorphic}, we can find occurrences of $l$ \emph{only} in $T_1$ (otherwise, due to the isomorphism invariance axiom, $T_1$ and $T_2$ would be isomorphic).

Algorithm~\ref{alg:random_iso} describes a procedure based on this observation. 
Instead of using just a single leaf $l$ however, it uses two sets of leaves $L_1$ and $L_2$ for comparison. 
Whenever an entirely new leaf is found (that is not an occurrence of a previously found leaf), it is added to the respective set of leaves and used for subsequent testing. 

If a leaf is an occurrence of a previously discovered leaf, it either reveals an isomorphism between the trees or an automorphism (possibly the identity) of one of the trees. This is again a consequence of the isomorphism invariance axiom.
If an isomorphism\footnote{Slightly abusing terminology we often use the term isomorphism to mean an isomorphism between the two trees rather than an isomorphism that is an automorphism of one tree.} is discovered, the algorithm has found two equally colored leaves in both trees and terminates. 
Otherwise, after a certain number of automorphisms have been found (the number depends on the desired error bound), the algorithm concludes that the trees are probably non-isomorphic within the given error bound. 
If the trees are isomorphic, we find automorphisms and isomorphisms with equal probability. Hence, we are highly unlikely to discover many automorphisms without also discovering an isomorphism. Figure~\ref{fig:algorithm} illustrates this key concept underlying the algorithm.

The following lemma proves correctness of the algorithm. 

\colorlet{blueish}{cyan!70!green}
\colorlet{blueish1}{blueish!20}
\colorlet{blueish2}{blueish!30}
\colorlet{blueish3}{blueish!40}
\colorlet{blueish4}{blueish!50}
\colorlet{blueish5}{blueish!60}
\colorlet{blueish6}{blueish!70}
\colorlet{blueish7}{blueish!80}

\colorlet{yellowish}{yellow}
\colorlet{yellowish1}{yellowish!20}
\colorlet{yellowish2}{yellowish!30}
\colorlet{yellowish3}{yellowish!40}
\colorlet{yellowish4}{yellowish!50}
\colorlet{yellowish5}{yellowish!60}
\colorlet{yellowish6}{yellowish!70}
\colorlet{yellowish7}{yellowish!90}

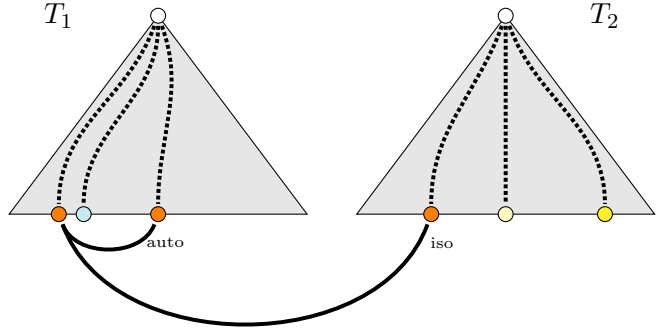
\begin{figure}
	\centering
	\begin{tikzpicture}[scale=0.66]
		\node (r0) at ( 0.0,  0.0) {}; 
		\node (s0) at (-3.0, -4.0) {}; 
		\node (s1) at ( 3.0, -4.0) {}; 
		\node (si1) at (-2.0, -4.0) {};
		\node (si2) at (-1.5, -4.0) {};
		\node (si3) at ( 0, -4.0) {}; 
		\node (si4) at ( 5.5, -4.0) {}; 
		\node (si5) at ( 7.0, -4.0) {}; 
		\node (si6) at ( 9.0, -4.0) {}; 
		
		\fill[fill=gray!20,draw] (r0.center)--(s0.center)--(s1.center)--(r0.center);
		
		\node (T1) at  (-2, 0) {$T_1$};
		
		\node (r01) at ( 0.0 + 7,  0.0) {}; 
		\node (s01) at (-3.0 + 7, -4.0) {}; 
		\node (s11) at ( 3.0 + 7, -4.0) {}; 
		\fill[fill=gray!20,draw] (r01.center)--(s01.center)--(s11.center)--(r01.center);
		
		\node (T2) at  (7 + 2, 0) {$T_2$};
		
		\draw[color=black, line width=1.5pt,] (si1) to [out=290, in=250] (si3) node [label=below:{\tiny \hspace{0.1cm} auto}] {}; 
		\draw[color=black, line width=1.5pt, ] (si1) to [out=290, in=250] (si4) node [label=below:{\tiny \hspace{0.2cm} iso}] {}; 
		
		\draw[color=black, line width=1.5pt,densely dotted] (r0) to [out=250, in=90] (si1); 
		\draw[color=black, line width=1.5pt,densely dotted] (r0) to [out=270, in=90] (si2); 
		\draw[color=black, line width=1.5pt,densely dotted] (r0) to [out=290, in=90] (si3); 
		
		\draw[color=black, line width=1.5pt,densely dotted] (r01) to [out=250, in=90] (si4); 
		\draw[color=black, line width=1.5pt,densely dotted] (r01) to [out=270, in=90] (si5); 
		\draw[color=black, line width=1.5pt,densely dotted] (r01) to [out=290, in=90] (si6); 
		
		\draw[color=black, fill=white]  (r01) circle (.15);
		\draw[color=black, fill=white]  (r0) circle (.15);
		
		\draw[color=black, fill=orange]  (si1) circle (.15);
		\draw[color=black, fill=blueish1]  (si2) circle (.15);
		\draw[color=black, fill=orange]  (si3) circle (.15);
		\draw[color=black, fill=orange]  (si4) circle (.15);
		\draw[color=black, fill=yellowish2]  (si5) circle (.15);
		\draw[color=black, fill=yellowish7]  (si6) circle (.15);
	\end{tikzpicture}
	\caption{The probabilistic bidirectional search algorithm simultaneously samples leaves in both trees using random walks. It then tests for automorphisms within a tree and isomorphisms across trees to perform the probabilistic test.} \label{fig:algorithm}
\end{figure}

\begin{algorithm}[t!] \label{alg:random_iso}
	\SetAlgoLined
	\SetAlgoNoEnd
	\caption[Probabilistic Bidirectional Search]{Probabilistic Bidirectional Search}
	\Fn{\RandomIso{$T_1$, $T_2$, $\epsilon$}}{
		\SetKwInOut{Input}{Input}
		\SetKwInOut{Output}{Output}
		\Input{black box search trees $T_1, T_2$ and probability $\epsilon$}
		\Output{two leaves $l_1 \in T_1$, $l_2 \in T_2$ such that $\Col(l_1) = \Col(l_2)$ with probability at least $1 - \epsilon$ if such leaves exist, $\bot$ otherwise}
		$c$    := $0$\;
		$e$    := $\ceil{-\log_2(\epsilon)}$\;
		$L_1 := L_2 := \emptyset$\;
		\While{$c \leq e$}{
			$f_{(aut, 1)}$ := $f_{(aut, 2)}$ := \texttt{false}\tcp*{$f_{(aut, i)}$ indicates automorphism found in $T_i$}
			$l_1$ := \RandomWalk{root of $T_1$} \label{alg:random_iso:walk1}\;
			$l_2$ := \RandomWalk{root of $T_2$} \label{alg:random_iso:walk2}\;
			\lIf{$\Col(l_1) = \Col(l_2)$}{\Return{$(l_1, l_2)$}}
			\For{$i \in \{1, 2\}$}{
				\For{$l' \in L_{(3-i)}$}{
					\lIf{$\Col(l_i) = \Col(l')$}{
						\Return{$(l_i, l')$} 
					}
					\lIf{$\Col(l_{3-i}) = \Col(l')$}{
						$f_{(aut, (3-i))}$ := \texttt{true}
					}
				}
			}
			\lIf{$\neg f_{(aut, 1)}$}{$L_1 := L_1 \cup \{l_1\}$}
			\lIf{$\neg f_{(aut, 2)}$}{$L_2 := L_2 \cup \{l_2\}$}
			\lIf{$f_{(aut, 1)} \vee f_{(aut, 2)}$}{$c := c+1$}
		}
		\Return{$\bot$}\;
	}
\end{algorithm}

\begin{lemma} Given black box search trees $T_1, T_2$ and a desired error probability $\epsilon$, Algorithm~\ref{alg:random_iso} solves the isomorphism exploration problem
with probability at least $1 - \epsilon$.
\end{lemma}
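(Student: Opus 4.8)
The plan is to separate soundness from the probabilistic analysis. First I would observe that every leaf pair the algorithm returns is a correct answer: each return is guarded by a color equality between a leaf of $T_1$ and a leaf of $T_2$ (either the two freshly sampled $l_1,l_2$, or one fresh leaf tested against a stored leaf of the \emph{opposite} set $L_{3-i}$). Thus the algorithm never produces a false positive, and I only need to bound false negatives. By the invariance axiom, two equally colored leaves in $T_1$ and $T_2$ exist if and only if $T_1\cong T_2$: a single shared color forces an isomorphism, and an isomorphism forces every color to be shared. Hence if the inputs are non-isomorphic, no cross-tree color match can ever occur, the algorithm can only output $\bot$, and this is correct. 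The sole source of error is outputting $\bot$ when $T_1\cong T_2$.

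Next I would isolate the core symmetry. Assume $T_1\cong T_2$ and fix a color-preserving isomorphism $\varphi$. Since $\varphi$ preserves the child relation it preserves the degrees along every root-to-leaf path, so it carries each leaf to a leaf of equal color reached by a random walk with exactly the same probability. Consequently the colors $X:=\Col(l_1)$ and $Y:=\Col(l_2)$ of the two leaves produced by the random walks in one iteration are independent and identically distributed. Writing $C_i$ for the set of colors stored in $L_i$ (these sets are disjoint, else a cross-tree match would already have triggered a return), I would express the event $E_{\mathrm{inc}}$ that an iteration increments $c$ and the event $E_{\mathrm{ret}}$ that it returns purely in terms of $X,Y,C_1,C_2$: inspecting the flags and return guards gives $E_{\mathrm{inc}}=\bigl(\{X\in C_1\}\cup\{Y\in C_2\}\bigr)\cap I^{c}$ and $E_{\mathrm{ret}}=I$, where $I=\{X=Y\}\cup\{X\in C_2\}\cup\{Y\in C_1\}$.

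The key step, and the main obstacle, is the per-iteration inequality $\Pr[E_{\mathrm{inc}}]\le\Pr[E_{\mathrm{ret}}]$, valid regardless of the current $C_1,C_2$. I would prove it via the involution $\sigma\colon(X,Y)\mapsto(Y,X)$, which preserves the joint distribution because $X,Y$ are i.i.d. A short case check shows $\sigma(E_{\mathrm{inc}})\subseteq I$: on $E_{\mathrm{inc}}$ we have $X\neq Y$ together with $X\in C_1$ or $Y\in C_2$, and evaluating $I$ at the swapped point $(Y,X)$ turns these into its disjuncts $\{X\in C_1\}$ and $\{Y\in C_2\}$, so the swapped point lies in $I$. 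Measure-invariance of $\sigma$ then yields $\Pr[E_{\mathrm{inc}}]=\Pr[\sigma(E_{\mathrm{inc}})]\le\Pr[I]=\Pr[E_{\mathrm{ret}}]$.

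Finally I would assemble the pieces. The inequality says that, conditioned on any history and on an iteration being \emph{decisive} (it either increments $c$ or returns), the conditional probability of incrementing is at most $1/2$. In the isomorphic case the algorithm outputs $\bot$ only if the first $e+1$ decisive iterations are all increments, which by repeated conditioning happens with probability at most $2^{-(e+1)}\le 2^{-e}\le\epsilon$ since $e=\ceil{-\log_2\epsilon}$; this is exactly the claimed error bound. I would close by checking termination, which the argument also delivers: a non-decisive iteration forces $X,Y\notin C_1\cup C_2$ with $X\neq Y$ and adds two fresh colors to $C_1\cup C_2$, so there are only finitely many such iterations, while among decisive iterations the algorithm performs at most $e+1$ increments and at most one return. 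Hence the algorithm always halts, never errs on non-isomorphic inputs, and errs with probability at most $\epsilon$ on isomorphic inputs, as required.
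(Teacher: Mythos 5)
Your proof is correct and rests on the same core symmetry as the paper's: when the trees are isomorphic, an iteration is at least as likely to produce a cross-tree match (a return) as a same-tree match (an increment of $c$), so accumulating enough increments before any return has probability at most $2^{-e}\le\epsilon$. The difference is one of rigor rather than of route. The paper packages iterations into ``tests'' that end when an automorphism or an isomorphism is found and asserts the two outcomes are ``equally likely'' because leaves stored in $L_i$ are found in either tree with equal probability; it does not spell out the event algebra. You make this precise: you identify the return and increment events as Boolean combinations of $\{X=Y\}$, $\{X\in C_1\}$, $\{Y\in C_2\}$, $\{X\in C_2\}$, $\{Y\in C_1\}$ for the i.i.d.\ leaf colors $X,Y$, and the swap involution $(X,Y)\mapsto(Y,X)$ yields $\Pr[E_{\mathrm{inc}}]\le\Pr[E_{\mathrm{ret}}]$ cleanly --- and indeed an inequality, not the paper's equality, is the accurate statement, since the fresh-pair match $\{X=Y\}$ only helps the return side. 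Your per-history conditioning on decisive iterations then replaces the paper's informal product over tests. You also supply two points the paper glosses over: a deterministic termination argument (every non-decisive iteration consumes two fresh colors, so there are finitely many) and the observation that exiting with $\bot$ actually requires $e+1$ increments, giving the slightly stronger bound $2^{-(e+1)}$. I see no gaps.
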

\begin{proof} 
	First, observe that whenever a pair of leaves is returned their color is checked for equality. This ensures that if the algorithm returns a pair of leaves, the answer is always correct. The algorithm can therefore only fail to produce the correct output by not finding a suitable pair of equally colored leaves despite the fact that they exist. In particular, this implies that if the trees are non-isomorphic, the algorithm cannot err.

	To bound the error probability, we view the computation as a sequence of \emph{tests}. 
	A test repeatedly performs random walks of the search trees until one automorphism (possibly the identity) or one isomorphism is found. 
	Hence, each test can be described as a sequence of $j$ iterations. In each iteration $j' < j$, neither $l_1$ nor $l_2$ produced an isomorphism or automorphism. 
	During a test, the algorithm neither terminates, nor is $c$ incremented. 
	In iteration $j$ of the test, an automorphism or isomorphism is found. 
	Now, note that when $T_1$ and $T_2$ are isomorphic, leaves contained in $L_i$ can equally likely be found in $T_1$ or $T_2$. 
	Hence, finding an automorphism or isomorphism in a test is equally likely. In particular, the probability is $\frac{1}{2}$ for finding an isomorphism for each~$i\in \{1,2\}$ rather than an automorphism. 
	Anytime we find an automorphism but no isomorphism, we increment $c$ by $1$. We terminate when $c$ reaches $e$. 
	Assuming the trees are isomorphic, the probability of this outcome is therefore bounded by $(\frac{1}{2})^e$.
\end{proof}
We now analyze the worst-case runtime of Algorithm~\ref{alg:random_iso} in the cost model of black box search trees.
This means that only Line~\ref{alg:random_iso:walk1} and Line~\ref{alg:random_iso:walk2} affect cost in the analysis.  
Since termination of the algorithm depends on random events the running time of the algorithm is a random variable and we consider expected runtime.

We are now ready to analyze the running time of Algorithm~\ref{alg:random_iso}.

\begin{lemma} Given black box search trees $T_1$, $T_2$ of heights~$h_1$ and~$h_2$, respectively, and error probability $\epsilon$, Algorithm~\ref{alg:random_iso} has an expected worst-case runtime bounded by
	\[
		\mathcal{O}\left(\ceil{\log_2(\frac{1}{\epsilon})}\cdot \max(h_1,h_2) \cdot\min\{\sqrt{|T_1|}, \sqrt{|T_2|}\}\right).
	\]
\end{lemma}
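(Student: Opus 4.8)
The plan is to separate the cost of a single loop iteration from the number of iterations, and to control the latter with a birthday-paradox argument powered by the invariance axiom. First I note that the only lines charged in the cost model are the two calls to \RandomWalk. Each such call follows a single root-to-leaf path, exploring at most one new node per level, so it costs at most $h_i+1$; hence every iteration of the \textbf{while}-loop costs $O(h_1+h_2)=O(\max(h_1,h_2))$. It therefore suffices to prove that the \emph{expected number of iterations} is $O\bigl(\ceil{\log_2(1/\epsilon)}\cdot\min\{\sqrt{|T_1|},\sqrt{|T_2|}\}\bigr)$.

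The key structural input is that the invariance axiom makes the walk distribution symmetric within color classes. Any automorphism of $T_i$ fixes the root (the unique node without a parent) and preserves the number of children at every node, so it preserves the probability that \RandomWalk\ (Algorithm~\ref{alg:random_walk}) terminates at a given leaf. Consequently, by the invariance axiom, all leaves of one color in $T_i$ are reached with the same probability, and a single random walk induces a well-defined distribution over the colors occurring in $T_i$. Crucially, discovering an automorphism of $T_i$ (that is, setting $f_{(aut,i)}$) is then exactly a \emph{collision}: sampling a color already recorded in $L_i$.

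Next I run a birthday estimate on $T_1$ (the argument for $T_2$ is identical, and the final bound takes the smaller tree). Writing $p_\chi$ for the probability of reaching color $\chi$ and $k_1$ for the number of colors of $T_1$, the probability that two independent walks in $T_1$ agree is $\sum_\chi p_\chi^2 \ge (\sum_\chi p_\chi)^2/k_1 = 1/k_1 \ge 1/|T_1|$ by Cauchy--Schwarz. From this I want to conclude that the expected number of walks until the first collision is $O(\sqrt{k_1})=O(\sqrt{|T_1|})$. The cleanest justification that this expectation stays $O(\sqrt{k_1})$ even for a highly skewed distribution is that the probability of $t$ samples being pairwise distinct equals $t!$ times the elementary symmetric polynomial $e_t(p_1,\dots,p_{k_1})$, which is Schur-concave; hence the uniform distribution on $k_1$ colors maximizes it, and there the expected waiting time is the classical $\Theta(\sqrt{k_1})$.

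Finally I assemble the pieces along the ``test'' decomposition from the correctness proof. The run splits into at most $\ceil{\log_2(1/\epsilon)}+1$ tests, each ending at the first iteration that produces a collision (an automorphism in either tree, or an isomorphism); the counter $c$ increases once per test, and the loop stops once $c$ exceeds $e$. Each test ends no later than the first $T_1$-collision, so its expected length is $O(\sqrt{|T_1|})$, and symmetrically $O(\sqrt{|T_2|})$. Because $L_1,L_2$ only grow over time, the per-walk collision probability never decreases, so a coupling shows that every test---even one started with $L_1$ already large---has expected length at most that of a test started from $L_1=\varnothing$. Summing over the $O(\ceil{\log_2(1/\epsilon)})$ tests and multiplying by the per-iteration cost $O(\max(h_1,h_2))$ yields the claimed bound. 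I expect the main obstacle to be this second-to-last step: converting the collision-probability lower bound into an $O(\sqrt{\cdot})$ bound on the \emph{expected} waiting time for a non-uniform distribution, and checking that this bound survives when tests are chained together with persistent $L_1,L_2$; the Schur-concavity of $e_t$ (extremality of the uniform distribution) and the monotonicity coupling are exactly what make both halves go through.
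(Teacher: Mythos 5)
Your proposal is correct and follows essentially the same route as the paper: charge $O(\max(h_1,h_2))$ per iteration, reduce to a birthday-collision bound on the number of iterations where the uniform distribution over leaves/colors is the worst case (you justify this via Schur-concavity of $e_t$ where the paper cites a reference), and account for the $\lceil\log_2(1/\epsilon)\rceil$ required collisions by a phase decomposition (your variable-length ``tests'' versus the paper's fixed-length batches of $2\sqrt{|T_i|}$ iterations each succeeding with probability $\ge 1/2$). These are equivalent framings of the same argument, so no substantive difference to report.
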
 
\begin{proof}
We will calculate the expected number of leaves explored before termination. 
We may consider the number of leaves instead of nodes by adding the multiplicative factor $\max(h_1,h_2)$ for the maximum length of a root to leaf walk in the search trees to our runtime.
(We explain subsequently how to improve this to factor.)

We may assume that the input trees are non-isomorphic and thus that the algorithm terminates because the condition $c > e = \ceil{-\log_2(\epsilon)}$ was met. This suffices to give an upper bound since earlier termination due to the discovery of isomorphisms clearly only leads to a smaller expected running time.

Consider running $2\sqrt{|T_i|}$ iterations of the algorithm. 
We may assume that in the $j$-th iteration $L_1$ and $L_2$ each contain at least $j$ leaves:
otherwise, some previous iteration already discovered an automorphism or an isomorphism. 
Furthermore, we may assume that the probability to find a leaf is uniform across all leaves: if probabilities are non-uniform, the chance for finding some leaves repeatedly only increases (see~\cite{MR451317}). 
The probability of finding an automorphism in $L_i$ (with $i \in \{1, 2\}$) within $j$ iterations is therefore at least $\frac{j}{|T_i|}$.
After $\sqrt{|T_i|}$ iterations, the probability for finding an automorphism in $T_i$ is then at least 
\[
	\frac{\sqrt{|T_i|}}{|T_i|} = \frac{1}{\sqrt{|T_i|}}.
\]  
Hence, the probability of finding no automorphism after $2\sqrt{|T_i|}$ many steps is at most
\[
	\left(\frac{\sqrt{|T_i|}-1}{\sqrt{|T_i|}}\right)^{\sqrt{|T_i|}} \leq \frac{1}{\mathbf{e}}<\frac{1}{2}.
\]
We view the computation as a series of batches consisting of $2\sqrt{|T_i|}$ iterations each. For each of them, the probability for finding an automorphism is at least $1/2$.
For termination, we need to find $e$ many automorphisms. The expected number of batches is thus in~$\mathcal{O}(e)$, which shows that the overall number of iterations is in~$\mathcal{O}(e\cdot  2\sqrt{|T_i|})$.
\end{proof}


We can improve the bound on the running time replacing the factor $\max\{h_1,h_2\}$ with the factor $\log_2(\min\{\sqrt{|T_1|}, \sqrt{|T_2|}\})$. To do so we alter the algorithm to take into account that the trees may be of very different sizes and also the trees may be quite unbalanced. To compensate for this we employ a doubling technique. However, we first need a bound for the expected length of the random root to leaf walks used in our algorithm.

\begin{lemma}\label{lem:exp:height}
In an~$n$-node black box search tree the expected length of a random root to leaf walk (i.e., the running time of Algorithm~\ref{alg:random_walk}) is in~$O(\log n)$.
\end{lemma}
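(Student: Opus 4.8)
The plan is to show that the expected depth of the leaf reached by the random walk is $O(\log n)$; since the running time of Algorithm~\ref{alg:random_walk} is proportional to this depth (one oracle query per level descended), the claim follows. The entire argument hinges on the structural requirement that no vertex has exactly one child, so that every internal node has degree at least two.

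First I would record the key probabilistic observation: for any node $v$ at depth $d$ (with the root at depth $0$), the probability $p_v$ that the walk passes through $v$ is the product $\prod_{u} 1/\deg(u)$ taken over the proper ancestors $u$ of $v$. Since each such $\deg(u) \ge 2$, this gives $p_v \le 2^{-d}$. In particular, writing $q_\ell := p_\ell$ for a leaf $\ell$, the reaching probabilities $(q_\ell)_{\ell \in L(T)}$ form a probability distribution on the leaves (all $q_\ell > 0$ and they sum to $1$), and $q_\ell \le 2^{-\operatorname{depth}(\ell)}$, equivalently $\operatorname{depth}(\ell) \le \log_2(1/q_\ell)$.

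Next I would translate the expected walk length into an entropy. The expected depth is $\sum_{\ell} q_\ell \operatorname{depth}(\ell) \le \sum_\ell q_\ell \log_2(1/q_\ell) = H(q)$, the Shannon entropy of the distribution $q$. By the maximum-entropy bound, which is immediate from Jensen's inequality applied to the concave function $\log_2$ (note $\sum_\ell q_\ell \cdot (1/q_\ell)$ equals the number of leaves), we get $H(q) \le \log_2 |L(T)| \le \log_2 n$. This yields the desired $O(\log n)$ bound on the expected depth, and hence on the running time.

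I expect the main conceptual point -- rather than a genuine obstacle -- to be resisting the temptation to bound the walk length level by level, which fails because a single level can contain up to $\Theta(n)$ nodes. The branching-at-least-two property does not limit the width of the tree, only the rate at which probability mass disperses along any root to leaf path; the entropy (equivalently, Kraft-inequality) viewpoint is exactly what converts this per-path decay into a global logarithmic bound. As an elementary alternative to the entropy computation, one can finish via $E[W] = \sum_{d \ge 1} \Pr[W \ge d]$ together with $\Pr[W \ge d] \le \min\{1,\, n\,2^{-d}\}$ (the factor $n$ bounding the number of depth-$d$ nodes and $2^{-d}$ the per-node reaching probability), splitting the sum at $d \approx \log_2 n$.
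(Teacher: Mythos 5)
Your proof is correct, but it takes a genuinely different route from the paper's. The paper proceeds by an extremal exchange argument: among all trees with a fixed number of leaves, the expected walk length $g(T)$ is maximized by a binary tree whose leaves lie on two consecutive levels (a first exchange step splits high-degree nodes to reduce to binary trees, a second one rebalances leaves at very different depths), and in that extremal tree even the \emph{maximum} root-to-leaf distance is $O(\log n)$. You instead bound the expectation directly: the no-unary-nodes hypothesis gives $q_\ell \le 2^{-\operatorname{depth}(\ell)}$ for the probability of reaching leaf $\ell$, so the expected depth is at most the entropy $H(q) \le \log_2 |L(T)| \le \log_2 n$ by Jensen (or, in your elementary variant, via $\sum_{d\ge 1}\Pr[W\ge d]\le \sum_{d\ge 1}\min\{1,\,n2^{-d}\}$, which is legitimate because the events of reaching distinct depth-$d$ nodes are disjoint). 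Your argument is shorter, yields an explicit constant ($\log_2 n + O(1)$), and sidesteps the bookkeeping the paper's exchange steps require (checking that each local modification increases $g$ and that the induction over the finitely many tree shapes terminates); the paper's approach buys the additional structural information of what the worst-case tree actually looks like, namely the balanced binary trees that also underlie the lower-bound family $\mathcal{M}_h$. Both arguments use the same and only essential hypothesis, that no vertex has exactly one child, and your closing remark correctly identifies why a naive level-by-level bound fails.
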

\begin{proof}
Let~$g(T)$ be the expected length of a random root to leaf walk in tree~$T$. 
Note that the number of leaves~$t$ of a black box search tree is in~$\Theta(n)$ for an~$n$-node tree.
We will argue that among the trees~$T$  with~$t$ leaves the value~$g(T)$ is maximal if~$T$ is a binary tree in which all leaves are located on two consecutive levels. Since in such a tree even the maximum root to leaf distance is~$O(\log n)$, this proves the theorem.

First let~$T$ be a tree which has a vertex~$v$ with more than two children~$u_1,\ldots,u_j$. Let~$T_i$ be the subtree of~$T$ rooted at~$u_i$ and assume without loss of generality that~$g(T_i)<g(T_j)$ for~$i<j$.
Alter the tree~$T$ into a new tree~$T'$ by inserting a new node~$w$ as a child of~$v$ and then relocating the trees~$T_1$ and~$T_2$ so that their roots are now children of~$w$ instead of~$v$.
Then, conditional on the event that the random walk reaches~$v$ the expected length of the walk has increased.
Thus~$g(T')>g(T)$. 
Since there are only finitely many trees with~$t$ leaves, by induction it suffices now to consider binary trees.

Let~$T$ be a binary tree and suppose there are leaves~$\ell_1$ and~$\ell_2$ whose height differs by more than~1. Say~$\ell_1$ is on the level furthest from the root. There must be another leaf~$\ell_3$ whose parent~$p$ is also the parent of~$\ell_1$.
Alter the tree to obtain a new tree~$T'$ by assigning~$\ell_2$ as the new parent of~$\ell_3$ and~$\ell_2$. 
Note that~$p$ is further away from the root than~$\ell_2$. Thus, the tree being binary, the probability of a random walk reaching~$\ell_2$ is larger than that of reaching~$p$. Therefore~$g(T')>g(T)$. By induction this proves the theorem.
\end{proof}

\begin{theorem}
There is an algorithm that solves the isomorphism exploration problem with probability at least $1 - \epsilon$ and expected worst-case runtime bounded by
\[
		\mathcal{O}\left(\ceil{\log_2(\frac{1}{\epsilon})}\cdot \log_2(\min\{\sqrt{|T_1|}, \sqrt{|T_2|}\}) \cdot\min\{\sqrt{|T_1|}, \sqrt{|T_2|}\}\right).
	\]
\end{theorem}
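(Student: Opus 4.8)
The plan is to run a single instance of Algorithm~\ref{alg:random_iso} with persistent counters $c,L_1,L_2$, but to \emph{truncate} every random walk and to drive a doubling schedule over a guess $s\in\{2,4,8,\dots\}$ for the size $n$ of the smaller tree. In the epoch for $s$ we abort every walk in \emph{both} trees as soon as it reaches depth $\lceil c_0\log s\rceil$ and we run $\Theta(e\sqrt s)$ iterations, where $e=\ceil{-\log_2\epsilon}$; the epoch stops early if $c$ reaches $e$ (return $\bot$) or an isomorphism is found (return the pair), and otherwise we move on to $2s$. The one structural fact that makes truncation admissible is that, by the invariance axiom, every isomorphism and automorphism fixes the root and hence preserves depth, so all leaves of a given color lie at a common depth. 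Cutting both walks at the same threshold therefore affects the two trees in exactly the same way.

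First I would establish correctness. Any leaf placed in $L_1$ or $L_2$ comes from a \emph{completed} walk, so its color class sits at depth at most the cap active when it was stored, and hence at most the current (non-decreasing) cap. Thus whenever a stored color is rediscovered---the event ending a test, which either increments $c$ or exposes an isomorphism---an occurrence of that color is within reach of the cap in both trees, and for isomorphic inputs it is found in $T_1$ and in $T_2$ with equal probability. The $\tfrac12$ argument of the correctness lemma then goes through verbatim for the single global counter $c$, bounding the error by $2^{-e}\le\epsilon$. That $c$ is global is crucial: there is only one possible declaration of non-isomorphism, so no union bound over epochs is paid.

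Next I would bound the expected cost, taking (as before) non-isomorphic inputs as the worst case. In the epoch for $s$ each capped walk costs $O(\log s)$, so the epoch costs $O(e\sqrt s\,\log s)$. Once $s\ge n$, the cap $c_0\log s$ exceeds a fixed multiple of the expected length $O(\log n)$ of a walk in the smaller tree (Lemma~\ref{lem:exp:height}), so by Markov's inequality such a walk completes with probability at least $\tfrac12$; applying the birthday/batch argument of the earlier runtime lemma to the at most $n$ leaves reachable within the cap shows that $\Theta(e\sqrt s)$ iterations then find $e$ automorphisms, and so the epoch terminates, with probability at least $\tfrac12$. Summing the epoch costs up to the first guess $s^\ast=\Theta(n)$ is a geometric series dominated by its last term, giving $O(e\sqrt n\log n)=O\!\big(\ceil{\log_2(1/\epsilon)}\cdot\log(\sqrt n)\cdot\sqrt n\big)$.

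The step I expect to be the main obstacle is controlling the \emph{overshoot}, since the per-epoch cost grows geometrically in $\sqrt s$ and a crude bound on the number of epochs beyond $s^\ast$ diverges. The fix is to boost each epoch's termination probability to a constant $q$ with $(1-q)\sqrt2<1$, say $q\ge\tfrac12$, which costs only a constant-factor inflation of the iteration budget (enough that $e$ automorphisms are found with probability at least $\tfrac12$). Then reaching the epoch for $2^{j}s^\ast$ has probability at most $(1-q)^j$, and the tail $\sum_{j\ge1}(1-q)^j\,O\!\big(e\sqrt{2^js^\ast}\,\log(2^js^\ast)\big)$ converges to $O(e\sqrt n\log n)$. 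The only genuinely new quantitative input is the constant lower bound on the completion probability of a truncated walk in the smaller tree, which combines Lemma~\ref{lem:exp:height} with the depth-homogeneity of color classes; the rest is routine geometric summation.
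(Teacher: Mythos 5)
Your proposal follows essentially the same route as the paper's proof: a doubling schedule on the budget, random walks truncated at logarithmic depth, Lemma~\ref{lem:exp:height} combined with Markov's inequality to show that truncated walks in the smaller tree complete with constant probability once the budget is large enough, and a geometric summation over epochs dominated by the first successful one. Your additional touches---keeping $c$, $L_1$, $L_2$ persistent across epochs to avoid a union bound over restarts, invoking depth-homogeneity of color classes to justify that truncation acts symmetrically on both trees, and the explicit overshoot estimate---are sound and make rigorous two points the paper's proof treats only implicitly (``restart'' and ``we essentially perform the same algorithm in pruned subtrees'').
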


\begin{proof}
Set~$n= \min\{|T_1|, |T_2|\}$. 
For an integer~$s$, we run the algorithm with a budget $2s$ that limits the number of walks that can be performed in each tree to~$s$.
Furthermore, we limit the length of the random walks by $h = c\log_2(s)$ for some suitable constant determined later.
Whenever a random walk exceeds the length $h$, we abort the walk and ignore it. 
If the algorithm does not terminate within the alloted budget then we double~$s$ and restart.
This guarantees that the number of queries does not exceed~$O(s\log s)$ when we run it with integer~$s$.

At least in the smaller of the two trees, 
automorphisms are found with high probability whenever $s$ exceeds $\sqrt{n}$. Indeed, by Lemma~\ref{lem:exp:height} the average length of a random walk in the smaller tree is in~$O(\log n)= O(\log \sqrt n)$. Thus, by Markov's bound with probability~$1/2$, the random walks end in a leaf of height at most~$O(\log n)$. 
Thus, by the Chernoff bound, for sufficiently large~$s$, with probability~$1/2$ at least~$1/4$ of the random walks end in a leaf of height at most~$O(\log n)$. We choose~$c$ so that this height is at most~$c \log_2 (n)$.

In case the graphs are isomorphic, automorphisms and isomorphisms are still found with equal probability. Thus our arguments for the probabilities remain in place since we essentially perform the same algorithm in pruned subtrees.
Regarding the running time, note that the probability that the algorithm does not terminate with budget~$s$ decreases exponentially with~$s$. That is, the probability is in~$O(a^{s/\min\{\sqrt{|T_1|}, \sqrt{|T_2|}\}})$ for some constant~$a<1$ once~$s>2\min\{\sqrt{|T_1|}, \sqrt{|T_2|}\}$.
\end{proof}

We should remark that the collision problem was previously exploited in the context of the group isomorphism problem~\cite{DBLP:conf/soda/Rosenbaum13}, but there structural information on the corresponding trees is known. Also the idea of sampling with random walks was used for isomorphism algorithm in~\cite{thesis}, but that algorithm only uses a single leaf in the search tree and thus cannot achieve sublinear running time guarantees.

\subsection{Las Vegas Bidirectional Search} \label{sec:upperbound:lasvegas} 

The major drawback of the bidirectional search algorithm is that it makes errors. 
Considering trees of height 1 it is not difficult to see that a non-erring algorithm, even a randomized one, will need to query a linear fraction of the leaves to distinguish non-isomorphic trees. 
However, if the degree of the input graphs is restricted, we can beat this bound.

To do this, we basically strive to choose a specific set of nodes in both trees that ensures a ``collision'' of leaves. This guarantees that we find equally colored leaves, if they exist.

We refer to the maximum degree among the considered trees as $d$.
The main new idea we use is to split the search tree in a balanced manner, followed by techniques to exploit isomorphism invariance. 
We want to note that the techniques for exploiting isomorphism invariance are inspired by the techniques described in \cite{McKay201494, DBLP:journals/jea/Stoichev19}, which essentially also perform splits. However, rather than heuristically applying them, here we perform them in a balanced and systematic way. 
Towards this goal we need the notion of a split $(v, h)$, which is simply a node $v \in T_i$ at level $h$ in one of the input trees. 
We define the \emph{cost} of a split as a pair of numbers $(s_1, s_2)$ as follows: 
\begin{enumerate} 
	\item $s_1$ is the size of the tree $T_{3-i}$ truncated at level~$h$ (i.e., the ball of radius $h$ around the root).
	\item If the tree $T_{i}$ truncated at level~$h$ is non-isomorphic to the tree $T_{3-i}$ truncated at level~$h$ then $s_2 := s_1$, \emph{otherwise} $s_2$ is the size of the subtree rooted in $v \in T_{i}$ at level~$h$. 
\end{enumerate}
The intuition for our exploration strategy is that $s_1$ bounds the size of the subtree to be explored in $T_{3-i}$, while $s_2$ bounds the size of the subtree to be explored in $T_i$ (up to logarithmic factors). 
While the definition for (2) may seem cumbersome at first, the idea is simply that if trees already differ in the first $h$ levels, we can decide non-isomorphism by exploring all nodes in the subtree of~$T_{i}$ consisting of the first~$h$ levels and then at most as many vertices within the first~$h$ levels of~$T_{3-i}$.

We call a split~$(v, h)$ a \emph{balanced split} whenever its cost $(s_1, s_2)$ satisfies $\max\{s_1, s_2\} \leq 4d \cdot \min\{\sqrt{|T_1|}, \sqrt{|T_2|}\}$.
Note that slightly abusing terminology in a balanced split the subtree with root~$v$ can be large, as long as the two trees truncated at level~$h$ are non-isomorphic and~$s_1$ is sufficiently small.

At this point it might neither be clear how to find a balanced split nor that a balanced split always exists. However, assume for now that we are given a balanced split. In that case we can efficiently solve isomorphism (even deterministically) as follows.
We perform breadth-first search up to level $h$ in both trees $T_1$ and $T_2$, visiting all nodes $N_1 \subseteq V(T_1)$ and $N_2 \subseteq V(T_2)$ up to and including level $h$. 
We can conclude non-isomorphism immediately whenever the breadth-first search has finished level $h$ and the two trees truncated at level~$h$ are non-isomorphic. 
We can thus assume now that these trees are isomorphic. 
By exploring all nodes up to level~$h$ (which is the level containing $v$), we surely explore the node $v$ in one of the trees. 
Without loss of generality assume in the following that $v \in V(T_1)$. 

In $T_1$, we explore \emph{all} leaves $L_v$ of the subtree rooted at \emph{one} fixed node, namely $v$ from the balanced split.
Let $N_2' \subseteq N_2$ denote the set of nodes at level $h$ in~$T_2$. 
Then, we explore for each node $v'$ in $N_2'$ \emph{one} arbitrary leaf $l_{v'}$ in the subtree rooted at~$v'$. 
If the trees are isomorphic, there must exist some $v' \in N_2'$ that can be mapped to $v$ with an isomorphism. Since we explored all leaves of $v$, the leaf $l_{v'}$ with ancestor $v'$ must be isomorphic (equally colored) to one of the leaves in $L_v$. 
Figure~\ref{fig:deterministic_iso} illustrates how the collision of leaves is enforced through the exploration strategy. 

The procedure for exploring, within our model, the first~$h$ levels of the subtree rooted at a particular node~$v$ is described in Algorithm~\ref{alg:subtree}.  
Starting from a given node $v$, it performs breadth-first traversal until only leaves are left, or $h$ levels have been explored. The algorithm is also given a cost limit~$s$ and the algorithm aborts if this limit is reached.

Using Algorithm~\ref{alg:subtree} as a subroutine, Algorithm~\ref{alg:deterministic_iso} gives an implementation in the exploration model of the entire algorithm just described.

\begin{figure}
	\centering
	\begin{tabular}{c c c}
		\begin{tikzpicture}[scale=0.66]
			\node (lr0) at ( 0.0-9.0,  0.0) {};
			\node (ls0) at (-3.0-9.0, -4.0) {}; 
			\node (ls1) at ( 3.0-9.0, -4.0) {};
			\node (lsi) at (-1.0-9.0, -4.0) {};
			\node (lsi0) at (-1.0 - 3-9.0, -4.0 - 4) {};
			\node (lsi1) at (-1.0 + 3-9.0, -4.0 - 4) {};
			\node (la) at (-2.5 -1-9.0, 0) {$T_1$ };

			\node (ll0) at (-3.0 - 1-9.0, -8.0) {}; 
			\node (ll1) at ( 3.0 - 1-9.0, -8.0) {}; 
			\node (ll2) at (-2.0 - 1-9.0, -8.0) {}; 
			\node (ll3) at (-1.0 - 1-9.0, -8.0) {};
			\node (ll4) at (0.0 - 1-9.0, -8.0) {};
			\node (ll5) at (1.0 - 1-9.0, -8.0) {};
			\node (ll6) at (2.0 - 1-9.0, -8.0) {};

			\fill[fill=gray!20,draw] (lsi.center)--(lsi0.center)--(lsi1.center)--(lsi.center);

			\draw[color=black, fill=white] (lr0) circle (.15);
			\draw[color=black, fill=white]   (lsi) circle (.15);
			\draw[color=black, fill=yellowish1]  (ll0) circle (.15);
			\draw[color=black, fill=yellowish6]  (ll1) circle (.15);
			\draw[color=black, fill=yellowish2]  (ll2) circle (.15);
			\draw[color=black, fill=yellowish3]  (ll3) circle (.15);
			\draw[color=black, fill=yellowish4]  (ll4) circle (.15);
			\draw[color=black, fill=orange]  (ll5) circle (.15);
			\draw[color=black, fill=yellowish5]  (ll6) circle (.15);

			\draw[color=black, line width=1.5pt,densely dotted] 
			(lr0) to [out=310, in=140] (lsi); 
			\node (a) at (2.5, 0) {$T_2$ };

			\node (rr0) at ( 0.0,  0.0) {}; 
			\node (s0) at (-3.0, -4.0) {}; 
			\node (s1) at ( 3.0, -4.0) {}; 
			\node (s2) at (-2.0, -4.0) {}; 
			\node (s3) at (-1.0, -4.0) {}; 
			\node (s4) at (0.0, -4.0) {}; 
			\node (s5) at (1.0, -4.0) {};
			\node (s6) at (2.0, -4.0) {};

			\node (l0) at (-3.0, -8.0) {}; 
			\node (l1) at ( 3.0, -8.0) {}; 
			\node (l2) at (-2.0, -8.0) {}; 
			\node (l3) at (-1.0, -8.0) {}; 
			\node (l4) at (0.0, -8.0) {}; 
			\node (l5) at (1.0, -8.0) {};
			\node (l6) at (2.0, -8.0) {};

			\fill[fill=gray!20,draw] (rr0.center)--(s0.center)--(s1.center)--(rr0.center);
			\path[draw] (s0)--(s1);
			\draw[color=black, fill=white] (rr0) circle (.15);

			\draw[color=black, line width=1.5pt,densely dotted] (s0) to [out=260, in=70] (l0); 
			\draw[color=black, line width=1.5pt,densely dotted] (s1) to [out=260, in=70] (l1); 
			\draw[color=black, line width=1.5pt,densely dotted] (s2) to [out=260, in=70] (l2); 
			\draw[color=black, line width=1.5pt,densely dotted] (s3) to [out=260, in=70] (l3); 
			\draw[color=black, line width=1.5pt,densely dotted] (s4) to [out=260, in=70] (l4); 
			\draw[color=black, line width=1.5pt,densely dotted] (s5) to [out=260, in=70] (l5); 
			\draw[color=black, line width=1.5pt,densely dotted] (s6) to [out=260, in=70] (l6); 
			\draw[color=black, fill=white]  (s0) circle (.15);
			\draw[color=black, fill=white]  (s1) circle (.15);
			\draw[color=black, fill=white]  (s2) circle (.15);
			\draw[color=black, fill=white]  (s3) circle (.15);
			\draw[color=black, fill=white]  (s4) circle (.15);
			\draw[color=black, fill=white]  (s5) circle (.15);
			\draw[color=black, fill=white]  (s6) circle (.15);

			\draw[color=black, fill=blueish1]  (l0) circle (.15);
			\draw[color=black, fill=blueish6]  (l1) circle (.15);
			\draw[color=black, fill=blueish2]  (l2) circle (.15);
			\draw[color=black, fill=blueish3]  (l3) circle (.15);
			\draw[color=black, fill=blueish4]  (l4) circle (.15);
			\draw[color=black, fill=orange]  (l5) circle (.15);
			\draw[color=black, fill=blueish5]  (l6) circle (.15);

			\draw[color=black, line width=1.5pt] (lsi) to [out=40, in=140] node [above, pos=0.5] {\tiny iso} (s5);
			\draw[color=black, line width=1.5pt] (ll5) to [out=320, in=220] node [above, pos=0.5] {\tiny iso} (l5);

		\end{tikzpicture}
	\end{tabular}
	\caption{State of the search trees after termination of Algorithm~\ref{alg:deterministic_iso}. If trees are isomorphic, a node $v_1$ in $T_1$ at some level $h$ must be mapped to some node $v_2$ in $T_2$ at level $h$ by an isomorphism. But if that is the case, then a leaf below $v_2$ is isomorphic to some leaf below $v_1$.} \label{fig:deterministic_iso}
\end{figure}
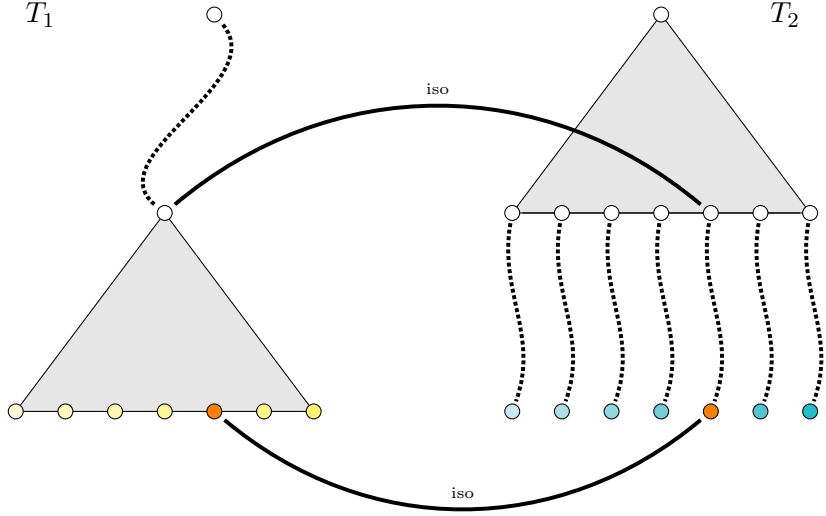

\begin{algorithm}[t] \label{alg:subtree}
	\SetAlgoLined
	\SetAlgoNoEnd
	\caption[Compute a Subtree of the Search Tree]{Breadth-first Search: Computing a Subtree of the Search Tree}
	\Fn{\Subtree{$v, h, s$}}{
		\SetKwInOut{Input}{Input}
		\SetKwInOut{Output}{Output}
		\Input{start node $v$, height limit $h$, cost limit $s$}
		\Output{$(N, s')$ where $N$ is the set of leaves of the subtree under $v$ up to level $h$ and $s'$ is the number of explored nodes, or $(\bot, \bot)$ if cost limit did not suffice}
		\lIf{$h = 0$}{\Return{$\{v\}$}}
		$N$      := $\{(v, 0)\}$\;
		$L$      := $\{\}$\;
		$s'$     := 0\;
		\While{$N \neq \emptyset$}{
			$(v, h') := $ \Some{N} \tcp*{pick arbitrary element of $N$}
			$N$     := $N \smallsetminus \{(v, h')\}$\;
			$h' := h' + 1$\;
			$c  := \NextChild(v)$\;
			\While{$c \neq \bot$}{
				$s' := s' + 1$\;
				\lIf{$s' > s$}{
					\Return{$(\bot, \bot)$}
				}
				\lIf{$h' = h \vee \deg(c) = 0$}{
					$L$ := $L \cup \{c\}$
				}
				\lElse{
					$N$ := $N \cup \{(c, h')\}$
				}
				$c  := \NextChild(v)$\;
			}
		}
		\Return{$(L, s')$}\;
	}
\end{algorithm}

\begin{algorithm} \label{alg:deterministic_iso}
	\SetAlgoLined
	\SetAlgoNoEnd
	\caption[Bidirectional Search]{Bidirectional Search}
	\Fn{\RandomIso{$T_1, T_2, v, h$}}{
		\SetKwInOut{Input}{Input}
		\SetKwInOut{Output}{Output}
		\Input{black box search trees $T_1, T_2$ and a split $v, h$ with $v \in V(T_1)$}
		\Output{two leaves $l_1 \in T_1$, $l_2 \in T_2$ such that $\Col(l_1) = \Col(l_2)$ if they exist, $\bot$ otherwise}
		$(N_2, s)\;$   := \Subtree{root of $T_2, h, \infty$} \label{alg:deterministic_iso:n2}\;
		$(N_1, \_)$  := \Subtree{root of $T_1, h, s$} \label{alg:deterministic_iso:n1} \tcp*{explores $v$}
		\If{$N_1 = \bot$ or $T_1$ and $T_2$ up to level $h$ non-isomorphic}{
			\Return{$\bot$}\;
		}
		$(L_v, \_)$   := \Subtree{$v, \infty, \infty$}\;
		\For{$n \in N_2$}{ \label{alg:deterministic_iso:walks}
			$l :=$ \RandomWalk{$n$}\tcp*{can be an arbitrary, even a determinsitic walk}
			\For{$l' \in L_v$}{
				\lIf{$\Col(l) = \Col(l')$}{
					\Return{$(l, l')$}
				}
			}
		}
		\Return{$\bot$}\;
	}
\end{algorithm} 

Let us argue an upper bound for the runtime of Algorithm~\ref{alg:deterministic_iso} (still assuming that we are given a balanced split). 
From the definition of a balanced split, we can conclude that $|L_v|$ and $|N_2|$ are bounded by $\mathcal{O}(d \cdot \min\{\sqrt{|T_1|}, \sqrt{|T_2|}\})$. 
Since exploration up to level $h$ in $T_1$ (Line~\ref{alg:deterministic_iso:n1}) may only explore as many nodes as exploration in $T_2$, we ensure that $|N_1| \leq |N_2|$ holds.
Now, the last phase probes at most $\mathcal{O}(d \cdot \min\{\sqrt{|T_1|}, \sqrt{|T_2|}\})$ many paths, giving an overall upper bound of
\[\mathcal{O}\left(d \cdot h(T_2)\cdot \min\{\sqrt{|T_1|}, \sqrt{|T_2|}\}\right).\]

However, note that the factor $h(T_2)$ can be excessively large because the paths from level~$h$ to the leaves can be of length~$\Theta(|T_2|)$.
 To prevent this, we alter the algorithm as follows. In~$T_2$ we allocate a total budget of~$c' \sqrt{|T_2|} \log(|T_2|)$ for some constant~$c'$ for all the level-$h$-to-leaves paths.
By Lemma~\ref{lem:exp:height} the expected length of one such path is in~$O(\log |T_2|)$. Thus, by linearity of expectation, the expected total cost for the paths is~$O(\sqrt{|T_2|} \log |T_2|)$. By Markov's inequality for a suitable choice of~$c'$, with probability~$1/2$ the total cost is in~$O(\sqrt{|T_2|} \log |T_2|)$. If the total cost exceeds this bound we simply restart the process.

Overall we can replace the factor $h(T_2)$ by~$O(\log(|T_2|))$, giving
\[\mathcal{O}(d \cdot \log_2(\max\{|T_1|, |T_2|\}) \cdot \min\{\sqrt{|T_1|}, \sqrt{|T_2|}\}).\]
More generally, it is easy to see that given a split of cost $(s_1, s_2)$, the modification of Algorithm~\ref{alg:deterministic_iso} runs in
\[\mathcal{O}(\log_2(\max\{{|T_1|}, {|T_2|}\}) \cdot \max\{s_1, s_2\}).\]
There is an interesting analogy to the runtime of the probabilistic bidirectional search algorithm. A main difference is that the runtime directly depends on the maximum degree of the trees.

The crucial question remains whether balanced splits always exist and whether they can be found efficiently.
We first address the question of existence of balanced splits.
\begin{lemma} \label{lem:eq_split_existence}
	Let $T_1, T_2$ be black box search trees with maximum degree $d$. Then there exists a balanced split for search trees $T_1$ and $T_2$. 

	In particular, if 
	\begin{itemize}
		\item $h'$ is the maximal level for which the tree $T_1$ truncated at level~$h'$ is smaller than $4d \cdot \min\{\sqrt{|T_1|}, \sqrt{|T_2|}\}$,
		\item  the two subtrees up to level $h'$ are isomorphic, and 
		\item  there are no leaves up to level $h'$,
	\end{itemize}
	then at least $\frac{3}{4}$ of the nodes at level $h'$ in the smaller tree constitute balanced splits with cost $s_2 \leq 2 \min\{\sqrt{|T_1|}, \sqrt{|T_2|}\}$. 
\end{lemma}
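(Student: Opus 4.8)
The plan is to establish the explicit ``in particular'' statement first, since it is the heart of the lemma, and then to dispose of the degenerate situations (where one of its three hypotheses fails) by exhibiting a balanced split directly. Throughout I would assume without loss of generality that $T_1$ is the smaller tree, so that $\min\{\sqrt{|T_1|},\sqrt{|T_2|}\} = \sqrt{|T_1|} =: m$, consistent with the way $h'$ is defined through $T_1$ in the statement. Recall that $d \ge 2$, since every internal node has at least two children.

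For the main case, assume the trees truncated at level $h'$ are isomorphic and no node at a level $0,\ldots,h'$ is a leaf. Let $k$ be the number of nodes of $T_1$ at level $h'$. The key quantitative claim is that $k > 2m$. First I would show that, because no node at a level below $h'$ is a leaf, every such node has at least two children, so the level sizes at least double as the level increases; summing the resulting geometric bound $|\text{level }j| \le k/2^{h'-j}$ over $j=0,\ldots,h'$ gives $|T_1 \text{ truncated at } h'| < 2k$. Next I would use the maximality of $h'$: the truncation at level $h'+1$ (which exists, as level $h'$ has no leaves) has size at least $4dm$, while the degree bound forces $|\text{level }h'+1| \le d k$. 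Combining $dk \ge |\text{level }h'+1| \ge 4dm - |T_1 \text{ truncated at }h'| > 4dm - 2k$ yields $(d+2)k > 4dm$, hence $k > 4dm/(d+2) \ge 2m$ for $d \ge 2$.

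With $k > 2m$ in hand, the remainder is an averaging argument. The subtrees rooted at the level-$h'$ nodes are pairwise disjoint and together are exactly the nodes on levels $\ge h'$, so $\sum_v |\text{subtree}(v)| \le |T_1| = m^2$. By Markov's inequality at most $m^2/(2m) = m/2$ of these subtrees have size exceeding $2m$, and since $m/2 < k/4$ this leaves more than $\frac{3}{4} k$ nodes $v$ with $|\text{subtree}(v)| \le 2m$. For each such $v$ the split $(v,h')$ has $s_2 = |\text{subtree}(v)| \le 2m$ (the truncations at level $h'$ being isomorphic) and $s_1 = |T_2 \text{ truncated at }h'| = |T_1 \text{ truncated at }h'| < 4dm$, so $\max\{s_1,s_2\} < 4dm$ and the split is balanced with $s_2 \le 2m$, exactly as claimed.

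Finally I would treat the degenerate cases, which only make a balanced split easier to produce. If the truncations at level $h'$ are non-isomorphic, let $\ell \le h'$ be the smallest level at which they already differ; placing the split at level $\ell$ in an appropriate one of the two trees makes the truncations at level $\ell$ non-isomorphic, so $s_2 = s_1 \le |T_1 \text{ truncated at }\ell| < 4dm$, using that truncation size is monotone in the level and the trees agree below $\ell$. If instead there is a leaf at some level $\ell \le h'$, the split taken at that leaf has $s_2 = 1$ and, since the trees agree below $\ell$, $s_1 < 4dm$ as well. In each case a balanced split exists. I expect the main obstacle to be the counting bound $k > 2m$: it is precisely there that all three hypotheses (maximality of $h'$, isomorphism of the truncations, and absence of shallow leaves) are used simultaneously, and one must additionally be careful about the meaning of ``isomorphic when truncated'' so that the equality $s_2 = s_1$ is read off correctly from the cost definition in the degenerate cases.
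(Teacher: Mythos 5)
Your proof is correct and follows essentially the same strategy as the paper's: use the absence of shallow leaves (level sizes at least double) together with the maximality of $h'$ to show that level $h'$ carries at least $2\min\{\sqrt{|T_1|},\sqrt{|T_2|}\}$ nodes, then average the disjoint subtree sizes against $|T_1|$ to conclude that at most a quarter of them can exceed $2\min\{\sqrt{|T_1|},\sqrt{|T_2|}\}$, with the degenerate cases (non-isomorphic truncations or a shallow leaf) dispatched directly. Your derivation of the bound $k>2m$ via $(d+2)k>4dm$ is packaged slightly differently from the paper's contradiction argument (which first shows the truncation at level $h'$ exceeds $4\min\{\sqrt{|T_1|},\sqrt{|T_2|}\}$), but it rests on the same two facts and is, if anything, a little more careful.
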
 
\begin{proof} 
	We can assume w.l.o.g. that $|T_1| \leq |T_2|$.
	Let $h'$ be the maximal level of $T_2$ where the size of the subtree up to level $h'$ is smaller than or equal to $4d \cdot \min\{\sqrt{|T_1|}, \sqrt{|T_2|}\}$. 

	If the subtrees up to level $h'$ in $T_1$ and $T_2$ differ, we have found a balanced split. 
	Furthermore, if there are leaves in the trees up to level $h'$, we have found a balanced split as well. 
	Hence, we assume that subtrees are isomorphic and no leaves are present. 

	We now argue that at least $\frac{3}{4}$ of the nodes at level $h'$ in $T_1$ constitute balanced splits.
	Consider level $h'$ of $T_1$ and $T_2$.  
	Let $s_{h'} \leq 4d \cdot \min\{\sqrt{|T_1|}, \sqrt{|T_2|}\}$ be the size of the subtree up to and including level $h'$ in $T_1$.
	By assumption, the respective subtree of $T_2$ is of equal size.
	Furthermore, by assumption there are no leaves up to level $h'$, implying that the tree contains at least $n_{h'} \geq \frac{1}{2}\cdot s_{h'}$ nodes at level $h'$.

	Towards a contradiction, we assume $s_{h'} \leq 4\cdot\min\{\sqrt{|T_1|}, \sqrt{|T_2|}\}$. But then we can increment~$h'$: since $s_{h'} \leq 4\cdot \min\{\sqrt{|T_1|}, \sqrt{|T_2|}\}$, it holds that $s_{h' + 1} \leq 4d \cdot \min\{\sqrt{|T_1|}, \sqrt{|T_2|}\}$. This is a contradiction to the assumption that $h'$ is maximal. Hence, we know $4 \cdot \min\{\sqrt{|T_1|}, \sqrt{|T_2|}\} < s_{h'} \leq 4d \cdot \min\{\sqrt{|T_1|}, \sqrt{|T_2|}\}$.

	We can immediately conclude $n_{h'} \geq 2 \cdot \min\{\sqrt{|T_1|}, \sqrt{|T_2|}\}$. Naturally, there can be at most $\frac{1}{2} \cdot \min\{\sqrt{|T_1|}, \sqrt{|T_2|}\}$ corresponding subtrees which have a size greater than $2 \cdot \min\{\sqrt{|T_1|}, \sqrt{|T_2|}\}$ with roots at level $h'$ in $T_1$ (since $|T_1| \leq |T_2|$). 
	Consequently, there must be at least $n_{h'} - \frac{1}{2} \cdot \min\{\sqrt{|T_1|}, \sqrt{|T_2|}\} \geq \frac{3}{4}n_{h'}$ subtrees rooted at level $h'$ with a size smaller than $2\cdot\min\{\sqrt{|T_1|}, \sqrt{|T_2|}\}$. This concludes the proof for both claims of the lemma. 
\end{proof}
\noindent Thus, for all search trees there exist balanced splits. 
However, we still need to explain how to find balanced splits efficiently. 
As shown by the lower bound in the next section, it is impossible to do this deterministically in an adequate running time. 
We thus need a randomized procedure for finding balanced splits.
We will show that the following method is suitable. Rather than pseudocode we give a high level description of the algorithm.
\refstepcounter{algocf}
\noindent \paragraph{Algorithm \arabic{algocf} \textnormal{(Las Vegas Balanced Splits)}.} \label{alg:lasvegas_bidirectional}\mbox{}\\
\makebox[1.5cm][l]{\emph{Input:}}  Black box search trees $T_1$ and $T_2$.\mbox{}\\
\makebox[1.5cm][l]{\emph{Output:}} The algorithm either returns a split $(v, h)$ or determines that $T_1$ and $T_2$ are non-isomorphic.
\begin{enumerate}
	\item Set cost limit $s \leftarrow 1$.
	\item \label{alg:lasvegas_bidirectional:bfs} Perform breadth-first search in $T_1$ and $T_2$, limiting the size of the traversed subtree to $s$ nodes (each). If after any level the breadth-first search trees for $T_1$ and $T_2$ differ, the algorithm terminates concluding non-isomorphism. Let $h$ denote the level reached so far. If breadth-first search discovers a leaf $v$ at or below level $h$, the algorithm terminates with the split $(v, h)$. 
	\item \label{alg:lasvegas_bidirectional:choose} For each $i \in \{1, 2\}$, uniformly and independently at random choose a node $v_i$ at level $h$ in both trees.
	\item Compute breadth-first search starting from the node $v_i$ in $T_i$, until one of the following conditions is met:
	\begin{itemize}
		\item Breadth-first search finishes exploring the entire subtree of $v_i$, constituting the split $(v_i, h)$.
		\item Breadth-first search explored $s$ nodes.
	\end{itemize}
	This step is performed in parallel for both $i \in \{1, 2\}$ (i.e., each step alternates between the two), until one method succeeds in finding a split or both finish unsuccessfully. If at any point a split is found, the algorithm terminates immediately returning the split.
	\item Set $s \leftarrow 2s$ and jump to Step~\ref{alg:lasvegas_bidirectional:bfs}.
\end{enumerate}
It turns out that when the cost limit is sufficiently high the algorithm finds balanced splits with good probability. The intuition behind this is based on Lemma~\ref{lem:eq_split_existence}: once the algorithm reaches a sufficiently high level of the tree with breadth-first search, the majority of nodes constitute balanced splits. 

From the description of the algorithm, the following corollary follows readily:
\begin{corollary} \label{lem:eq_split_leaf}
If Algorithm~\ref{alg:lasvegas_bidirectional} chooses in some iteration an element $v$ at some level~$h$ in Step~\ref{alg:lasvegas_bidirectional:choose} so that~$(v,h)$ constitutes a split with cost $(s_1, s_2)$ and at this point $s \geq s_2$, then the algorithm terminates after this iteration and returns a split with cost $(s_1', s_2')$ where $s_1' = s_1, s_2' \leq s_2$.
\end{corollary}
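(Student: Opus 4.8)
The plan is to verify the claim by directly unwinding the description of Algorithm~\ref{alg:lasvegas_bidirectional} together with the definition of the cost of a split. First I would observe that, since $v$ is chosen in Step~\ref{alg:lasvegas_bidirectional:choose} of the iteration under consideration, the algorithm must have passed through Step~\ref{alg:lasvegas_bidirectional:bfs} of that same iteration without terminating. Consequently, the breadth-first search trees of $T_1$ and $T_2$ agreed up to level $h$ and no leaf was encountered at or below level $h$; in particular the two trees truncated at level $h$ are isomorphic and therefore of equal size. Writing $v \in T_i$, this means that the ``otherwise'' branch of the cost definition applies, so $s_1$ equals the common size of both trees truncated at level $h$, and $s_2$ equals the size of the subtree rooted at $v$.

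Next I would argue termination. In the breadth-first step following the choice of $v$, the algorithm performs breadth-first search from $v = v_i$ inside $T_i$ with node budget $s$. Since by hypothesis $s \geq s_2$ and $s_2$ is exactly the size of the subtree rooted at $v$, this search explores the entire subtree of $v$ without ever reaching the budget, so the branch for $v_i$ succeeds in producing the split $(v_i, h)$. Hence at least one of the two parallel branches finishes successfully, the algorithm returns a split immediately, and the iteration does not fall through to the budget-doubling step. This establishes that the algorithm terminates after the current iteration.

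It remains to bound the cost $(s_1', s_2')$ of the returned split. Both candidate nodes $v_1$ and $v_2$ lie on level $h$, so whichever of them is returned yields a split at level $h$; because the two trees truncated at level $h$ are isomorphic and of equal size, the first cost component $s_1'$ equals that common size, which is precisely $s_1$. For the second component I would exploit the alternating ``in parallel'' execution: the two branches explore one node at a time in lockstep, so the branch that is returned is the one whose subtree is fully explored using the fewest nodes. Since the branch for $v$ completes after exactly $s_2$ nodes, the returned branch completes after at most $s_2$ nodes, and its subtree---again using isomorphism up to level $h$, so that the ``otherwise'' branch of the cost definition applies to it as well---has size equal to its second cost component. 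Therefore $s_2' \leq s_2$.

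The argument is almost entirely definitional; the only points requiring care are the use of isomorphism up to level $h$, which simultaneously forces $s_1' = s_1$ and identifies $s_2$ (and $s_2'$) with a genuine subtree size, and the bookkeeping of the alternating parallel search, which is what upgrades the conclusion from $s_2' = s_2$ to $s_2' \leq s_2$. I do not expect a substantive obstacle beyond stating these two observations precisely.
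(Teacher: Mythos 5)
Your proposal is correct and follows essentially the same route as the paper's own proof: use $s \geq s_2$ to conclude that the breadth-first search from $v$ completes its subtree and hence the iteration terminates, and use the alternating parallel execution to argue that if the other branch wins instead, it can only do so with a cheaper subtree, while $s_1' = s_1$ holds for any node at level $h$. Your added care about why the ``otherwise'' branch of the cost definition applies (isomorphism of the truncations after Step~2) is a detail the paper leaves implicit, but it does not change the argument.
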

\begin{proof} By assumption, $(v, h)$ constitutes a split with cost $(s_1, s_2)$. This implies that the entire subtree below $v$ is smaller than $s_2$. Consequently, since $s$ is large enough, Algorithm~\ref{alg:lasvegas_bidirectional} explores the entire subtree below $v$ in Step~4, unless probing the other way in parallel terminates first: this only contradicts our claim if it results in a more expensive split. However, since a more costly split necessitates more steps to explore its respective subtree, running the search in parallel ensures that the cheaper split is found first. Note that $s_1 = s_1'$ holds for all nodes at level $h$, concluding the proof.
\end{proof}
Using this, we can prove that Algorithm~\ref{alg:lasvegas_bidirectional} terminates with a balanced split with good probability, giving what we need for our isomorphism test:
\begin{lemma} \label{lem:eq_split_compute}
If Algorithm~\ref{alg:lasvegas_bidirectional} terminates with a split, it constitutes a balanced split with a probability of at least $\frac{3}{4}$.
\end{lemma}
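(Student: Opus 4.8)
The plan is to condition on the structure of the two trees up to the critical level $h'$ supplied by Lemma~\ref{lem:eq_split_existence} and to show that, apart from one genuinely random case, the algorithm can only return splits whose cost is already bounded by the current budget. Throughout, assume without loss of generality that $|T_1|\le|T_2|$ and write $m=\min\{\sqrt{|T_1|},\sqrt{|T_2|}\}=\sqrt{|T_1|}$. Let $h'$ be the maximal level whose truncated tree has size $s_{h'}\le 4d\cdot m$, and recall from the proof of Lemma~\ref{lem:eq_split_existence} that $4m< s_{h'}\le 4d\cdot m$ in the interesting regime. The first observation I would record is a uniform cost bound: whenever breadth-first search in Step~\ref{alg:lasvegas_bidirectional:bfs} has reached some level $h$ within budget $s$, both trees truncated at level $h$ have size at most $s$; hence any leaf split returned in Step~\ref{alg:lasvegas_bidirectional:bfs} and any subtree split returned in Step~4 at budget $s$ has cost $(s_1,s_2)$ with $\max\{s_1,s_2\}\le s$. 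Consequently, any split the algorithm returns while $s<s_{h'}\le 4d\cdot m$ is automatically balanced.

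With this in hand I would dispose of the degenerate cases. If the two trees already differ at some level $\le h'$, or if a leaf occurs at some level $\le h'$, then the relevant event is detected by breadth-first search as soon as the budget exceeds the (at most $4d\cdot m$) size of the truncated tree up to that level; at that moment Step~\ref{alg:lasvegas_bidirectional:bfs} either reports non-isomorphism (no split) or returns a leaf split of cost $\max\{s_1,s_2\}\le s_{h'}\le 4d\cdot m$, and every earlier return has cost $<s_{h'}$ by the uniform cost bound. Hence in these cases every returned split is balanced with probability $1$, which suffices.

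The main case is that the two trees are isomorphic up to level $h'$ and have no leaves up to level $h'$, so that Lemma~\ref{lem:eq_split_existence} applies and at least $\tfrac{3}{4}$ of the level-$h'$ nodes of $T_1$ are balanced splits with $s_2\le 2m$. Let $I$ be the first iteration with budget $s\ge s_{h'}$; by the doubling schedule its budget satisfies $s_{h'}\le s<2s_{h'}$. The key step --- and the one I expect to be the crux --- is to show that breadth-first search in iteration $I$ reaches \emph{exactly} level $h'$, so that the sampling in Step~\ref{alg:lasvegas_bidirectional:choose} really takes place at level $h'$. For this I would use that, there being no leaves up to level $h'$ and no node of degree one, every one of the $n_{h'}\ge\tfrac{1}{2} s_{h'}$ nodes at level $h'$ has at least two children, whence $s_{h'+1}\ge s_{h'}+2n_{h'}\ge 2s_{h'}>s$. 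Thus the budget of iteration $I$ is too small to complete level $h'+1$, and Step~\ref{alg:lasvegas_bidirectional:choose} samples a uniformly random node $v_1$ at level $h'$ of $T_1$.

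It remains to assemble the probability. By Lemma~\ref{lem:eq_split_existence}, $v_1$ is a balanced split with $s_2\le 2m\le s_{h'}\le s$ with probability at least $\tfrac{3}{4}$. Conditioned on this event, Corollary~\ref{lem:eq_split_leaf} (applicable since $s\ge s_2$) guarantees that the algorithm terminates in iteration $I$ and returns a split of cost $(s_1',s_2')$ with $s_1'=s_1\le s_{h'}\le 4d\cdot m$ and $s_2'\le s_2\le 2m$, which is balanced. Since every return strictly before iteration $I$ is balanced by the uniform cost bound, the event that a non-balanced split is returned is contained in the event that $v_1$ is not a balanced split, which has probability at most $\tfrac{1}{4}$. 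Therefore the returned split is balanced with probability at least $\tfrac{3}{4}$, as claimed. The one place demanding care is the no-overshoot argument of the previous paragraph: the geometric lower bound $s_{h'+1}\ge 2s_{h'}$ must be matched against the doubling bound $s<2s_{h'}$, and it is precisely the absence of leaves up to level $h'$ (forcing branching of every level-$h'$ node) that makes these two estimates line up, so that the random choice cannot slip past level $h'$ to a deeper level where Lemma~\ref{lem:eq_split_existence} provides no control.
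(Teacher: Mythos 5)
Your proof is correct and follows essentially the same route as the paper's: reduce to the case where the trees are isomorphic and leaf-free up to the critical level $h'$, observe that any split returned below that level is automatically balanced because its cost is bounded by the current budget, and then combine Lemma~\ref{lem:eq_split_existence} with Corollary~\ref{lem:eq_split_leaf} at level $h'$ to get the $\frac{3}{4}$ bound. Your explicit no-overshoot computation ($s_{h'+1}\ge 2s_{h'}>s$ for the first budget with $s\ge s_{h'}$) is just a spelled-out version of the paper's remark that, in the absence of leaves, each doubling of $s$ advances the breadth-first level by at most one.
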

\begin{proof} 
We can assume w.l.o.g.\ that $|T_1| \leq |T_2|$.
Let $h'$ be the maximal level of $T_2$ where the size of the subtree up to level $h'$ is smaller than or equal to $4d \cdot \min\{\sqrt{|T_1|}, \sqrt{|T_2|}\}$. 

First, we observe that we may always assume that the breadth-first trees explored in $T_1$ and $T_2$ up to level $h'$ are isomorphic, since otherwise Algorithm~\ref{alg:lasvegas_bidirectional} terminates immediately with no split (Step~\ref{alg:lasvegas_bidirectional:bfs}).
Furthermore, since Algorithm~\ref{alg:lasvegas_bidirectional} terminates when discovering leaves within the first $h'$ levels in the breadth-first exploration (Step~\ref{alg:lasvegas_bidirectional:bfs}) 
and this result in balanced splits, we may assume that Algorithm~\ref{alg:lasvegas_bidirectional} explores no leaves in the breadth-first search.  
We note that if Algorithm~\ref{alg:lasvegas_bidirectional} finds no leaves, each doubling of $s$ can only increase the level $h$ reached by breadth-first search by at most~$1$.  

Consider now Step~\ref{alg:lasvegas_bidirectional:choose} in the algorithm once level $h'$ is reached.
The algorithm picks a node of level $h'$ uniformly at random.
We now argue that with probability at least $\frac{3}{4}$ a node that is the root of a \emph{small subtree} is chosen, i.e., a subtree that is smaller than $2 \cdot \min\{\sqrt{|T_1|}, \sqrt{|T_2|}\}$.
This however follows readily from Lemma~\ref{lem:eq_split_existence}: since all subtrees at level $h'$ are chosen for exploration with uniform probability, we can conclude that choosing a node that is the root of such a small subtree in $T_1$ has a probability of at least $\frac{3}{4}$. 
From the maximality of $h'$ we can conclude that $s \geq 4 \cdot \min\{\sqrt{|T_1|}, \sqrt{|T_2|}\}$ (see proof of Lemma~\ref{lem:eq_split_existence}).
Hence, Corollary~\ref{lem:eq_split_leaf} ensures that the algorithm terminates with a balanced split when choosing a node that is the root of a small subtree.

Furthermore, note that before level $h'$ is reached, it is not possible for Algorithm~\ref{alg:lasvegas_bidirectional} to return a split that is not a balanced split since the cost of probing is smaller than the bound for balanced splits.
\end{proof}
At level $h'$ Algorithm~\ref{alg:lasvegas_bidirectional} terminates with probability $\frac{3}{4}$.
Careful inspection of the proof of Lemma~\ref{lem:eq_split_existence} and Lemma~\ref{lem:eq_split_compute} reveals that Algorithm~\ref{alg:lasvegas_bidirectional} also terminates with probability at least $\frac{3}{4}$ after every consecutive doubling of $s$.
While the cost (and therefore potential execution time) doubles, the probability of terminating before reaching the respective cost quarters, which defines a geometric series: this results in an expected runtime of Algorithm~\ref{alg:lasvegas_bidirectional} bounded by
\[\mathcal{O}(d \cdot \min\{\sqrt{|T_1|}, \sqrt{|T_2|}\}).\]
We now run Algorithm~\ref{alg:lasvegas_bidirectional} and Algorithm~\ref{alg:deterministic_iso} in series, which results in the desired algorithm: if Algorithm~\ref{alg:lasvegas_bidirectional} terminates with non-isomorphism we are done and otherwise Algorithm~\ref{alg:deterministic_iso} tests isomorphism with the provided split.
We observe that whenever Algorithm~\ref{alg:lasvegas_bidirectional} terminates with a split, the costs of the split are also bounded by $s$: the execution time can not be larger than the cost of the returned split.  
Running the previously described modification of Algorithm~\ref{alg:deterministic_iso} with a split of cost $s$ incurs expected cost bounded by $\mathcal{O}(\log_2(\max\{\sqrt{|T_1|}, \sqrt{|T_2|}\}) \cdot s)$.
Using this, we can conclude the section with the following theorem.
\begin{theorem} Let $T_1, T_2$ be black box search trees with maximum degree $d$. There exists an algorithm for the isomorphism exploration problem with no error that has an expected worst-case runtime bounded by 
\[\mathcal{O}(d \cdot \log_2(\max\{\sqrt{|T_1|}, \sqrt{|T_2|}\}) \cdot \min\{\sqrt{|T_1|}, \sqrt{|T_2|}\}).\]
\end{theorem}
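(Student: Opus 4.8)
The plan is to assemble the final theorem from the machinery already developed in this subsection, treating it essentially as a bookkeeping combination of three ingredients: the existence of balanced splits (Lemma~\ref{lem:eq_split_existence}), the Las Vegas procedure for finding them (Algorithm~\ref{alg:lasvegas_bidirectional} together with Corollary~\ref{lem:eq_split_leaf} and Lemma~\ref{lem:eq_split_compute}), and the isomorphism test that consumes a split (the modified Algorithm~\ref{alg:deterministic_iso}). The overall strategy is to run the split-finding algorithm and the split-consuming algorithm in series, argue that each phase has the claimed expected cost, and conclude by linearity of expectation (with a small amount of care about the probabilistic restart structure).

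First I would establish the cost of the split-finding phase. By Lemma~\ref{lem:eq_split_compute}, once the cost limit $s$ reaches the threshold $4d \cdot \min\{\sqrt{|T_1|},\sqrt{|T_2|}\}$ at the critical level $h'$, each iteration terminates with a balanced split with probability at least $\tfrac{3}{4}$. Since $s$ is doubled geometrically and the work in each round is $\mathcal{O}(s)$ (the breadth-first search and the parallel probing are each bounded by the cost limit), I would sum the geometric series: the expected total work is dominated by the last few rounds, giving expected cost $\mathcal{O}(d \cdot \min\{\sqrt{|T_1|},\sqrt{|T_2|}\})$ for Algorithm~\ref{alg:lasvegas_bidirectional}. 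The key point is that the probability of \emph{not} terminating halves (in fact quarters) with each doubling, so the expected number of extra rounds beyond the threshold is $\mathcal{O}(1)$.

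Next I would bound the cost of the split-consuming phase. Here the crucial observation, already noted in the text, is that whenever Algorithm~\ref{alg:lasvegas_bidirectional} returns a split, the cost $\max\{s_1,s_2\}$ of that split is bounded by the current value of $s$, and by Lemma~\ref{lem:eq_split_compute} this split is balanced with probability $\tfrac{3}{4}$, so $\max\{s_1,s_2\} \in \mathcal{O}(d \cdot \min\{\sqrt{|T_1|},\sqrt{|T_2|}\})$. Feeding such a split into the modified Algorithm~\ref{alg:deterministic_iso}, which runs in expected time $\mathcal{O}(\log_2(\max\{|T_1|,|T_2|\}) \cdot \max\{s_1,s_2\})$ (the logarithmic factor coming from Lemma~\ref{lem:exp:height} applied to the level-$h$-to-leaf paths), yields expected cost $\mathcal{O}(d \cdot \log_2(\max\{\sqrt{|T_1|},\sqrt{|T_2|}\}) \cdot \min\{\sqrt{|T_1|},\sqrt{|T_2|}\})$.

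Finally I would combine the two phases and address correctness and the handling of the bad event. The main obstacle, and the place requiring genuine care, is the interaction of the two probabilistic restart mechanisms: Algorithm~\ref{alg:lasvegas_bidirectional} may return a split that is not balanced (with probability at most $\tfrac{1}{4}$), and the modified Algorithm~\ref{alg:deterministic_iso} restarts when its path-budget is exceeded. I would argue that correctness is never compromised—the algorithm is Las Vegas because any returned pair of leaves is verified to have matching color and non-isomorphism is certified by an exhaustive breadth-first comparison up to level $h$—so the only issue is the expected running time. To control this, I would wrap the composition in an outer restart: run both phases with the budget dictated by the current $s$, and if the split turns out non-balanced or the path-probing exceeds its allowance, discard and double $s$. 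Since each wrapped attempt succeeds with constant probability and costs at most a constant factor more than its predecessor, the expected total cost is again a geometric sum dominated by the successful round, giving the claimed bound $\mathcal{O}(d \cdot \log_2(\max\{\sqrt{|T_1|},\sqrt{|T_2|}\}) \cdot \min\{\sqrt{|T_1|},\sqrt{|T_2|}\})$.
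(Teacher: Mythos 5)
Your proposal is correct and follows essentially the same route as the paper: run the Las Vegas split-finding procedure (Algorithm~5) and the split-consuming bidirectional search (modified Algorithm~4) in series, bound the first phase by the geometric doubling argument from Lemma~\ref{lem:eq_split_compute}, and bound the second by $\mathcal{O}(\log_2(\max\{|T_1|,|T_2|\})\cdot\max\{s_1,s_2\})$. The only cosmetic difference is in handling a possibly unbalanced split: the paper avoids your outer restart loop by observing that the cost of \emph{whatever} split Algorithm~5 returns is automatically bounded by the terminal cost limit $s$, whose expectation is already $\mathcal{O}(d\cdot\min\{\sqrt{|T_1|},\sqrt{|T_2|}\})$ by the same geometric series --- note that your restart trigger ``the split turns out non-balanced'' is not directly checkable without knowing the tree sizes, so you would have to trigger the restart on a budget overrun instead, as you indeed suggest for the path-probing allowance.
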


\section{Lower Bounds}
Now we prove lower bounds within the confines of the model. Easy lower bounds can be obtained by considering input trees of height 1, however, we are interested in bounds that also apply to trees of bounded degree. We utilize the search tree family $\mathcal{M}_h$ for this purpose (see Figure~\ref{fig:mk}). A tree is in $\mathcal{M}_h$, if it is a complete binary tree of height $h$ such that leaves have pair-wise distinct colors, i.e., for all $(l_1, l_2) \in L(V(\mathcal{M}_h))^2$ with $l_1 \neq l_2$ it holds that $\Col(l_1) \neq \Col(l_2)$. 

We remark that shrunken multipedes (see \cite{DBLP:conf/esa/NeuenS17}) are graphs that produce search trees very similar to those in $\mathcal{M}_h$ when used as input for individualization-refinement algorithms.

\begin{figure}
	\centering
	\begin{tikzpicture}[
		every node/.style = {minimum width = 0.7em, inner sep = 0, outer sep = 0, draw, circle, fill=gray!50},
		level/.style = {sibling distance = 35mm/#1, level distance=7mm}
		]
		\node[thick, fill=white] {}
		child {	node[thick, fill=white] {}
				child {	node[thick, fill=white] {}
						child {	node[thick, fill=orange] {}
						}
						child {	node[thick, fill=lightblue] {}
						}
				}
				child {	node[thick, fill=white] {}
						child {	node[thick, fill=red!50] {}
						}
						child {	node[thick, fill=yellow!50] {}
						}
				}
		}
		child {	node[thick, fill=white] {}
				child {	node[fill=white, thick] {}
						child {	node[thick, fill=green!50] {}
						}
						child {	node[thick, fill=purple!50] {}
						}
				}
				child {	node[fill=white, thick] {}
						child {	node[thick, fill=lightgray] {}
						}
						child {	node[thick, fill=cyan!50] {}
						}
				}
		};
	\end{tikzpicture}
	\caption{A search tree from the class~$\mathcal{M}_3$.} \label{fig:mk}
\end{figure}

Generally, due to their uniformity,  trees from $\mathcal{M}_{h}$ can only be distinguished or proven isomorphic by considering leaves. A traversal strategy must either conclude -- with good probability ($\frac{1}{2}$)-- that the set of leaves of the trees are entirely disjunct or equal. In the case when trees are isomorphic, the traversal strategy must provide two leaves with equal colors.

\subsection{Randomized Lower Bound}

We prove lower bounds for the isomorphism problem for randomized algorithms that err. 

We will use a particular type of exploration algorithm for our purposes. We call an algorithm \emph{unadaptive} on a class of inputs, if
on each input from the class, the number of queries is always the same (in particular independent of randomness involved in the algorithm) and
the queries performed by the algorithms on inputs from the class are independent of the answers given by the oracle. (The queries may still depend on the randomness involved in the algorithm.)
This means in particular that even when matching leaves have been found the algorithm will simply continue to run, possibly making further queries, and at some later point make a decision about the output. 

\begin{lemma}\label{lem:nonadaptive:suffices}
If some (possibly randomized) algorithm~$A$ solves the isomorphism exploration problem with expected run-time~$f(n)$ and error-probability~$\epsilon$ then for each~$h\in \mathbb{Z}$ there is a randomized algorithm~$B$ that is unadaptive on the class of inputs~$\mathcal{M}_h$ with a run-time in~$\mathcal{O}(f(n))$ and error-probability~$\epsilon$.
\end{lemma}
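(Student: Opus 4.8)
The plan is to exploit the extreme uniformity of the class $\mathcal{M}_h$. First I would record the structural facts that drive everything: every tree in $\mathcal{M}_h$ is the complete binary tree of height $h$, so the underlying (uncolored) shape — and hence every degree the oracle ever reports — is the same for all inputs in the class and is known to $B$ in advance. Consequently the only input-dependent information an exploration algorithm can ever obtain is the color of a queried leaf. Moreover, since colors are pairwise distinct within each tree, the isomorphism-invariance axiom forces the two trees either to carry identical leaf-color sets (the ``yes'' instances) or disjoint ones (the ``no'' instances); in either case, up until the algorithm first queries a pair of equally colored leaves lying in the two different trees, every color it has seen is distinct.

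The key idea is the \emph{collision-free schedule}. I would argue that, as long as no cross-tree color collision has occurred, the algorithm's behavior is independent of the concrete input: its queries can depend on the observed colors only through the pattern of which observed leaves share a color, and we may assume this is the only dependence, because both the problem and the class $\mathcal{M}_h$ are invariant under relabeling colors, so the algorithm's advantage cannot rest on particular color values. This WLOG reduction is a step I would spell out via a recoloring/symmetrization argument. Given it, simulating $A$ with a virtual oracle that reports the fixed $\mathcal{M}_h$ structure and hands out a fresh, previously unseen color for each queried leaf produces a query sequence $Q(r)$ that depends only on $A$'s internal randomness $r$, not on the actual input or the oracle's true answers. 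This simulation uses no real oracle access, so it is free in the cost model.

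I would then define $B$ as follows: using randomness $r$, compute $Q(r)$ by the internal simulation above, query exactly the nodes of $Q(r)$ against the real oracle in that fixed order (crucially, without stopping early even once a collision is seen), record the real colors, and finally output a matching leaf pair if the recorded colors contain one and $\bot$ otherwise. By construction the queries of $B$ are a function of $r$ alone, independent of the oracle's answers, which is exactly unadaptivity. For correctness, $B$ never errs on a ``no'' instance, since no matching pair exists and $B$ truthfully reports $\bot$; on a ``yes'' instance, $B$ errs only when $Q(r)$ contains no cross-tree collision, and in that event $A$ run on the same input follows all of $Q(r)$ without ever finding a match and hence itself errs. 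Therefore $\Pr_r[B \text{ errs}] \le \Pr_r[A \text{ errs}] \le \epsilon$. For the running time, the length $|Q(r)|$ equals the number of queries $A$ makes on a collision-free run, whose expectation is at most the expected running time $f(n)$ of $A$, giving $\mathcal{O}(f(n))$ in expectation. (If the fixed-count requirement is read as ``fixed once the randomness $r$ is fixed,'' no truncation is needed, since $|Q(r)|$ already depends on $r$ alone.)

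The remaining point is to make the query count a single fixed number rather than a random one. I would fix a budget $m = \Theta(f(n))$, truncate $Q(r)$ after $m$ queries, and pad shorter schedules with harmless repeated explorations so that every run issues exactly $m$ queries; the value $m$ is chosen via Markov's inequality applied to $A$'s expected running time so that a match is lost to truncation only when $A$ would itself have exceeded $m$ queries, an event whose probability is controlled and absorbed into the claimed error bound. The genuine obstacle in the whole argument is the recoloring/symmetrization step that licenses the input-independence of the collision-free schedule $Q(r)$: without it, a value-sensitive $A$ could diverge from $Q(r)$ before any collision, and the clean implication ``$B$ errs $\Rightarrow A$ errs'' would break down. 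Everything else is bookkeeping around this observation and the birthday-style behavior of $Q(r)$ on the two instance types.
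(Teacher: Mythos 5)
Your proposal is correct and follows essentially the same route as the paper: both construct $B$ by simulating $A$ (first made worst-case $\mathcal{O}(f(n))$ via Markov's inequality and restarting) against a virtual oracle that reports the fixed complete-binary-tree structure and hands every discovered leaf a fresh, previously unused random color, so that the query schedule depends only on the internal randomness, and then compare the real colors of the queried leaves at the end. The recoloring/symmetrization step you single out as the crux is precisely what the paper compresses into the one-line claim that pretending leaves have fresh random colors ``can only decrease the error probability''; your version just makes that coupling explicit.
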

\begin{proof}

If an algorithm~$A$ solving the problem with expected run-time~$f(n)$ and error probability~$\epsilon$ is given, then by repeating the algorithm and using Markov's inequality we can design an algorithm~$A'$ with a run-time bounded by~$\mathcal{O}(f(n))$ (not just in expectation) that still has an error probability of~$\epsilon$.
For this note that even if the trees have been partially explored, it is possible to simulate the algorithm from scratch by pretending that explored nodes of the tree are unexplored.

To obtain the algorithm~$B$ we alter algorithm~$A'$ by simply pretending all discovered leaves have a randomly chosen previously unused color. More precisely, when a leaf is discovered, we pretend it has a color in~$\{1,\ldots,2^h\}$ drawn independently and uniformly at random from the colors that have not been used yet.  
We continue the simulation until~$A'$ halts. We then claim the input to be a yes instance if we found matching leaves and a no instance otherwise. This can only decrease the error probability in comparison to~$A'$.
\end{proof}

For our lower bound we define a combinatorial problem of trees. Let~$M_h$ be the complete binary tree of height $h$, so that trees in~$\mathcal{M}_h$ are colored versions of~$M_h$. For two rooted trees~$U,S$ let~$\inj(U,S)$ be the set of root respecting injective homomorphisms from~$U$ to~$S$. That is, the set contains the injective maps from~$V(U)$ to~$V(S)$ that map the root of~$U$ to the root of~$S$ and that map an edge of~$U$ to an edge of~$S$.

From now on fix a height~$h$ and consider the tree~$M_h$. Let~$\mathcal{U}_h$ be the set of trees~$U$ for which~$\inj(U, M_h)$ is non-empty. This set contains exactly the trees isomorphic to a subtree of~$M_h$.
For two trees~$U_1,U_2\in \mathcal{U}_h$ we let~$P(h,U_1,U_2)$ be the probability that for uniformly chosen~$\alpha_1\in \inj(U_1,M_h)$ and independently, uniformly chosen~$\alpha_2\in \inj(U_2,M_h)$ the set~$L(M_h)\cap \alpha_1(V(U_1)) \cap \alpha_2(V(U_2))$ is non-empty.
For integers~$a,b$ define \[P(h,a,b)=\max \left\{P(U_1,U_2)\mid |L(U_1)|=a \wedge |L(U_2)|= b\right\}.\] 
Let~$P(h,m)= \max \{P(h,a,b)\mid  a+b\leq m \}$. We will argue that~$P(h,m)$ constitutes an upper bound on the probability of success for a randomized algorithm for isomorphism exploration that queries at most $m$ nodes.

\begin{lemma}
Let~$B$ be an algorithm that is unadaptive on the class of inputs from~$\mathcal{M}_h$. Suppose on inputs from~$\mathcal{M}_h$ algorithm~$B$ makes~$m$ queries and has error probability~$\epsilon$. Then~$1-\epsilon \leq P(h,m)$. \end{lemma}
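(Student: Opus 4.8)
The plan is to reduce the success probability of the unadaptive algorithm~$B$ to the purely combinatorial quantity~$P(h,m)$, thereby bounding~$1-\epsilon$ from above. The central idea is that because~$B$ is unadaptive on~$\mathcal{M}_h$, its queries do not depend on the oracle's answers, only on its internal randomness; hence for a fixed random seed the \emph{set} of nodes that~$B$ explores is a fixed subtree~$U$ of~$M_h$ (rooted, since the root is always explored, and with the property that a node is queried only if its parent was queried). The algorithm gathers no useful structural information beyond the leaf colors it reads, so its only route to a correct ``isomorphic'' verdict is to actually discover a pair of identically colored leaves, one in each tree. Everything therefore hinges on whether the explored regions in~$T_1$ and~$T_2$ share a common leaf of~$M_h$ under the hidden isomorphism.

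First I would set up the correspondence between explored regions and elements of~$\inj(U,M_h)$. An input in~$\mathcal{M}_h$ is a colored copy of~$M_h$, and by the Complete Isomorphism Invariance axiom any two inputs are isomorphic as uncolored trees; the ``hidden'' part of a yes-instance is precisely which isomorphism of~$M_h$ relabels the children at each internal node. Since~$B$ cannot distinguish children by the oracle (the child ordering is arbitrary and fixed per input), exploring a subtree~$U_i\subseteq M_h$ in tree~$T_i$ corresponds, from the vantage point of the underlying labeled~$M_h$, to choosing a uniformly random root-respecting injective embedding~$\alpha_i\in\inj(U_i,M_h)$. The two embeddings~$\alpha_1,\alpha_2$ are independent because the two trees are independent inputs. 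A correct positive answer requires that~$B$ read the same color in both trees, which happens exactly when the images~$\alpha_1(V(U_1))$ and~$\alpha_2(V(U_2))$ share a leaf of~$M_h$; this is the event whose probability is~$P(U_1,U_2)\le P(h,|L(U_1)|,|L(U_2)|)$.

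Next I would aggregate over~$B$'s randomness. For each fixed seed~$r$ the explored trees~$U_1(r),U_2(r)$ are determined and satisfy~$|L(U_1(r))|+|L(U_2(r))|\le |U_1(r)|+|U_2(r)|\le m$, since~$B$ makes at most~$m$ queries in total across both trees. Conditioned on~$r$, the probability (over the random choice of input, i.e.\ over~$\alpha_1,\alpha_2$) that~$B$ can correctly certify isomorphism is at most~$P(h,|L(U_1(r))|,|L(U_2(r))|)\le P(h,m)$ by the definitions of~$P(h,a,b)$ and~$P(h,m)$ as maxima. Averaging over~$r$ with the tower rule, the overall success probability on yes-instances is at most~$P(h,m)$. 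Since~$B$ has error probability~$\epsilon$, its success probability is~$1-\epsilon$, giving~$1-\epsilon\le P(h,m)$.

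The main obstacle I anticipate is the reduction step, not the averaging: one must argue carefully that~$B$ has genuinely \emph{no} way to output ``isomorphic'' correctly other than by discovering a color collision among the explored leaves. This requires using unadaptiveness to rule out the algorithm inferring a match indirectly, and using the random-recoloring trick of Lemma~\ref{lem:nonadaptive:suffices} to guarantee that each discovered leaf's color is an independent fresh draw, so that distinct explored leaves never collide spuriously and a collision reflects a genuine coincidence of images in~$M_h$. Making precise that the uniform random embedding model correctly captures the distribution induced by a uniformly random input in~$\mathcal{M}_h$ (together with~$B$'s internal randomness), and that the independence across the two trees holds, is the technically delicate part; once that is in place the bound~$1-\epsilon\le P(h,m)$ follows directly from the definitions.
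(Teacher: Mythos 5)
Your proposal is correct and follows essentially the same route as the paper's proof: condition on the (seed-determined) topology of the two explored subtrees, observe that unadaptiveness makes the induced embeddings into $M_h$ uniform and independent elements of $\inj(U_i,M_h)$, bound the collision probability of the explored leaf sets by $P(h,U_1,U_2)\le P(h,m)$, and average. The paper conditions on the event that the explored subtrees have a given isomorphism type rather than on the random seed, but this is only a coarser version of the same decomposition.
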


\begin{proof}
Consider the behavior of algorithm~$B$ on inputs from~$\mathcal{M}_h$ with the colors~$\{1,\ldots,2^h\}$ being randomly assigned bijectively to the leaves.
 The algorithm~$B$ explores subtrees~$T'_1$ and~$T'_2$, one in each of the input trees. Since the algorithm makes~$m$ queries, together these trees can have at most~$m$ leaves.
Our argument groups the possibilities in which~$B$ can query the oracle according to the topology of the two subtrees.

For two trees~$U_1$ and~$U_2$ consider the event~$E_{U_1,U_2}$ that~$T'_1$ is isomorphic to~$U_1$ and~$T'_2$ is isomorphic to~$U_2$. The event can of course only occur if~$|U_1|+|U_2| \leq m$.
Recall that algorithm~$B$, being unadaptive, does not use the information on colors of the leaves provided by the oracle until the very end. Thus, 
the probability that~$B$ finds matching leaves on isomorphic inputs conditional to event~$E_{U_1,U_2}$ is~$P(h,U_1,U_2)$.

We conclude that the probability that~$B$ finds matching leaves\footnote{In our problem definition the algorithm has to find two leaves of the same color. If the task only asked to decide whether the graphs are isomorphic, the algorithm could still guess, which would incur another factor of~$1/2$.} is at most~$P(h,m)$.
\end{proof}

We now show that the trees need to have sufficiently many leaves for~$P(h,T_1,T_2)$ to be large.

\begin{lemma}\label{lem:lower:bound:tree:homs}
$P(h,a,b) \leq \frac{ab}{2^h}$. 
\end{lemma}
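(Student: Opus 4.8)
The plan is to reduce the ``intersection in a leaf'' event to a collision between the images of individual leaves and then apply a union bound. First I would observe that every map $\alpha \in \inj(U, M_h)$ is depth-preserving: since $\alpha$ sends the root to the root and maps edges to edges, a vertex at depth $k$ in $U$ is sent to a vertex at depth $k$ in $M_h$. Consequently a leaf of $M_h$, which sits at depth $h$, can lie in the image $\alpha_1(V(U_1))$ only if it is the image of a depth-$h$ vertex of $U_1$, and any such vertex is necessarily a leaf of $U_1$ because $U_1$ has height at most $h$. Writing $L_h(U_i)$ for the set of depth-$h$ leaves of $U_i$, we get $L(M_h) \cap \alpha_i(V(U_i)) = \alpha_i(L_h(U_i))$ with $|L_h(U_i)| \le |L(U_i)|$, so at most $a$ and $b$ leaves of $M_h$ are hit by $\alpha_1$ and $\alpha_2$ respectively.

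The key step is to pin down the distribution of the image of a single leaf. I claim that for a fixed $u \in L_h(U_1)$ and uniformly random $\alpha_1 \in \inj(U_1, M_h)$, the image $\alpha_1(u)$ is uniform over the $2^h$ leaves of $M_h$. This follows from the symmetry of the complete binary tree: $\Aut(M_h)$ acts transitively on $L(M_h)$, and for any $\sigma \in \Aut(M_h)$ the composition $\sigma \circ \alpha_1$ again lies in $\inj(U_1, M_h)$, with $\alpha_1 \mapsto \sigma \circ \alpha_1$ being a bijection of $\inj(U_1, M_h)$. Hence the law of $\alpha_1(u)$ is invariant under $\Aut(M_h)$ and therefore uniform on leaves; the same holds for $\alpha_2(w)$ with $w \in L_h(U_2)$.

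It remains to assemble these pieces. The event defining $P(h, U_1, U_2)$ is precisely that $\alpha_1(L_h(U_1))$ and $\alpha_2(L_h(U_2))$ share a common leaf, i.e. $\alpha_1(u) = \alpha_2(w)$ for some $u \in L_h(U_1)$ and $w \in L_h(U_2)$. By a union bound this probability is at most $\sum_{u,w} \Pr[\alpha_1(u) = \alpha_2(w)]$. Since $\alpha_1$ and $\alpha_2$ are independent and each single image is uniform over $2^h$ leaves, every summand equals $\sum_{\ell \in L(M_h)} (1/2^h)(1/2^h) = 1/2^h$. As there are at most $ab$ pairs $(u,w)$, the total is at most $ab/2^h$, which bounds $P(h, U_1, U_2)$ for every admissible pair and hence bounds the maximum $P(h,a,b)$.

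I expect the only genuinely delicate point to be the uniformity claim for $\alpha_1(u)$: it hinges on transitivity of $\Aut(M_h)$ on the leaves together with the fact that post-composition by an automorphism permutes $\inj(U_1, M_h)$. Everything else — depth preservation, the union bound, and the per-pair collision probability $1/2^h$ — is routine.
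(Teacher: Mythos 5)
Your proof is correct and follows essentially the same route as the paper: both arguments reduce to the fact that, by transitivity of $\Aut(M_h)$ on the leaves, the image of a fixed depth-$h$ vertex under a uniform element of $\inj(U_i,M_h)$ is uniform over the $2^h$ leaves, so each pair collides with probability $1/2^h$. The paper phrases the final step as a first-moment bound followed by Markov's inequality, whereas you use a union bound directly, but for an integer-valued count at threshold one these are the same estimate.
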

\begin{proof}
For two trees~$T_1$ and~$T_2$ let~$E(T_1,T_2)$ be the expected number of elements contained in the set $L(\mathcal{M}_h)\cap \alpha_1(V(T_1)) \cap \alpha_2(V(T_2))$, where~$\alpha_1$ and~$\alpha_2$ are taken independently and uniformly from~$\inj(T_1,\mathcal{M}_h)$ and~$\inj(T_2,\mathcal{M}_h)$, respectively. We define~$E(h,a,b)$ in analogy to~$P(h,a,b)$ as the maximum~$E(T_1,T_2)$ over all choices of~$T_1$ and~$T_2$ with~$|L(T_1)|=a$ and~$|L(T_2)|=b$.
By the Markov inequality it suffices to show that~$E(h,a,b)\leq \frac{ab}{2^h}$.

Only vertices that are of distance~$h$ from the root in~$T_i$ can be mapped to a vertex in~$L(\mathcal{M}_h)$.
The automorphism group of~$\mathcal{M}_h$ can map each leaf to every other leaf (i.e., acts transitively on the leaves). The graph~$\mathcal{M}_h$ has~$2^h$ leaves. 
Thus, for vertices~$v_1\in T_1$ and~$v_2\in T_2$ both of distance~$h$ from the root, the probability that~$\alpha(v_1) = \alpha(v_2)$ is at most~$\frac{1}{2^h}$. 

By linearity of expectation the expected number of pairs~$(v_1,v_2)$ for which~$\alpha(v_1) = \alpha(v_2)\in L(\mathcal{M}_h)$ is at most~$\frac{1}{2^h}\cdot a\cdot b$.
\end{proof}

\begin{theorem}[randomized lower bound]\label{thm:lowerboundiso} In the black box search tree model, a (possibly randomized making errors) traversal strategy runs in $\Omega(\min\{\sqrt{|\eT_1|}, \sqrt{|\eT_2|}\})$ worst-case cost for the isomorphism exploration problem, even on binary trees.
\end{theorem}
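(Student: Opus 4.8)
The plan is to fuse the three preceding lemmas into one quantitative chain and then specialize to the family $\mathcal{M}_h$ of complete binary trees, for which $|T_1| = |T_2| = 2^{h+1}-1$ and hence $n := \min\{|T_1|,|T_2|\} = 2^{h+1}-1$ with $\min\{\sqrt{|T_1|},\sqrt{|T_2|}\} = \Theta(2^{h/2})$. Fix a constant error bound bounded away from one half, say $\epsilon < 1/2$, so that $1-\epsilon$ is a positive constant. Suppose some (possibly randomized, erring) traversal strategy $A$ solves the isomorphism exploration problem with expected worst-case cost $f(n)$ and error at most $\epsilon$; the goal is to force $f(n) = \Omega(\sqrt n)$.

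First I would invoke Lemma~\ref{lem:nonadaptive:suffices} to pass from $A$ to an algorithm $B$ that is \emph{unadaptive} on $\mathcal{M}_h$, has error at most $\epsilon$, and whose number of queries is bounded in the worst case (not merely in expectation) by $m = \mathcal{O}(f(n))$. This step is exactly what converts the expected-time guarantee of $A$ into a hard query budget, which the subsequent counting argument requires. Next I would apply the preceding lemma bounding the success probability of an unadaptive algorithm to obtain $1-\epsilon \le P(h,m)$, and then Lemma~\ref{lem:lower:bound:tree:homs}, which gives $P(h,a,b) \le ab/2^h$.

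Since $P(h,m) = \max\{P(h,a,b) : a+b \le m\}$ and the product $ab$ subject to $a+b \le m$ is maximized at $a=b=m/2$, these combine into
\[
  1-\epsilon \;\le\; P(h,m) \;\le\; \max_{a+b\le m}\frac{ab}{2^h} \;\le\; \frac{m^2}{4\cdot 2^h}.
\]
Rearranging yields $m \ge 2\sqrt{(1-\epsilon)\,2^h} = \Omega(2^{h/2})$. Because each tree in $\mathcal{M}_h$ has $2^h$ leaves and $2^{h+1}-1$ nodes, we have $2^{h/2} = \Theta(\sqrt{|T_i|}) = \Theta(\min\{\sqrt{|T_1|},\sqrt{|T_2|}\})$, so $m = \Omega(\min\{\sqrt{|T_1|},\sqrt{|T_2|}\})$. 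Since $m = \mathcal{O}(f(n))$, this forces $f(n) = \Omega(\min\{\sqrt{|T_1|},\sqrt{|T_2|}\})$, which is precisely the claimed bound; it holds on binary trees because every member of $\mathcal{M}_h$ is binary.

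I expect the only genuinely delicate point to be the reduction to unadaptive algorithms with a worst-case query budget. All the quantitative content sits in the clean inequality $P(h,a,b) \le ab/2^h$, but that inequality only governs a fixed-budget unadaptive algorithm, so the whole argument hinges on Lemma~\ref{lem:nonadaptive:suffices} actually delivering the $\mathcal{O}(f(n))$ worst-case bound while preserving the error probability. Everything downstream of that is the elementary maximization of $ab$ under the constraint $a+b \le m$.
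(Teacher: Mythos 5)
Your proposal is correct and follows essentially the same route as the paper's proof: reduce to an unadaptive algorithm on $\mathcal{M}_h$ via Lemma~\ref{lem:nonadaptive:suffices}, bound its success probability by $P(h,m)$, and apply Lemma~\ref{lem:lower:bound:tree:homs} with the observation that $ab \le m^2/4$ when $a+b\le m$. The only (immaterial) difference is that you run the inequality forward to solve for $m$, whereas the paper argues contrapositively by fixing the query budget at $\tfrac{1}{2}\sqrt{|\mathcal{M}_h|}$ and bounding the success probability by a constant.
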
 
\begin{proof}
By Lemma~\ref{lem:nonadaptive:suffices}, it suffices to show the statement for an unadaptive algorithm~$B$ on $\mathcal{M}_h$.
By Lemma~\ref{lem:lower:bound:tree:homs}, if~$B$ queries less than~$\frac{1}{2}\sqrt{|\mathcal{M}_h|}$ nodes then both trees~$T_1$ and~$T_2$ uncovered by~$B$ have at most~$\frac{1}{2}\sqrt{|\mathcal{M}_h|}$ leaves. But by the previous lemma we know that $P(h,\frac{1}{2}\sqrt{|\mathcal{M}_h|},\frac{1}{2}\sqrt{|\mathcal{M}_h|}) \leq \frac{1}{4}$, which shows that the probability that~$B$ finds matching leaves in the two trees is at most~$\frac{1}{4}$. This shows that~$B$ cannot find matching leaves with probability~$\frac{1}{2}$.
\end{proof}

\subsection{Deterministic Lower Bound}

We exploit the randomized lower bound to obtain a strengthened deterministic one. 

\begin{theorem}[deterministic lower bound] In the black box search tree model, a deterministic traversal strategy runs in $\Omega(\min\{{|\eT_1|}, {|\eT_2|}\})$ worst-case cost for the isomorphism exploration problem, even on binary trees.
\end{theorem}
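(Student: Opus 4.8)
The plan is to reduce the deterministic lower bound to the randomized one already established in Theorem~\ref{thm:lowerboundiso}, exploiting the fact that a deterministic algorithm is a very special case of a randomized one, but one that cannot benefit from the averaging over random colorings. The key observation is that on the class $\mathcal{M}_h$, all trees are the same complete binary tree $M_h$; they differ only in the bijective coloring of the leaves. A deterministic traversal strategy, when run on such inputs, produces a fixed sequence of queries that depends only on the oracle answers it receives. Since the topology of every tree in $\mathcal{M}_h$ is identical, until a deterministic algorithm queries a leaf it receives no distinguishing information whatsoever (every internal node has degree $2$), so its sequence of internal-node queries is completely predetermined and identical across all inputs in $\mathcal{M}_h$.

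Concretely, I would argue as follows. Fix $h$ and consider a deterministic algorithm $D$ that solves isomorphism exploration correctly on all of $\mathcal{M}_h$. Because the trees are binary and uniform, $D$'s behavior is forced: the nodes it explores in each of $T_1$ and $T_2$ form two fixed subtrees $T_1', T_2'$ of $M_h$, determined before any colors are revealed. The only way $D$ can ever certify isomorphism is to exhibit two equally-colored leaves, one in each tree. For $D$ to be guaranteed to find such a pair on \emph{every} isomorphic input in $\mathcal{M}_h$, the sets of leaves it explores in the two trees must be such that for any bijective coloring there is a forced collision — but since the coloring is an arbitrary bijection, this can only be guaranteed if the explored leaf sets $L(T_1') \cap L(M_h)$ and $L(T_2') \cap L(M_h)$ together cover enough of the $2^h$ leaves. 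The hard part will be making precise the claim that a deterministic algorithm cannot leverage the isomorphism-invariance axiom to prune: since every leaf has a distinct color, the axiom yields only the identity automorphism and provides no structural shortcut, so there is no way to deduce an unexplored leaf's color from an explored one.

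I would then quantify the requirement. Suppose $D$ explores $a$ leaves in $T_1$ and $b$ leaves in $T_2$ with $a+b = m$. On a uniformly random bijective coloring, the probability that the explored leaf sets contain a common color is, by the counting in Lemma~\ref{lem:lower:bound:tree:homs} (or directly, by the argument that the two fixed leaf-subsets of the $2^h$ leaves collide in color with probability at most $\frac{ab}{2^h}$), at most $\frac{ab}{2^h} \le \frac{m^2}{2^h}$. But a \emph{deterministic} algorithm makes a \emph{fixed} set of queries, so if $m < 2^h$ there necessarily exists some bijective coloring making the two explored leaf sets color-disjoint, while the algorithm must also succeed on some isomorphic input whose restriction to the explored region looks identical. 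By a pigeonhole / adversary argument, as long as $m$ is below a linear fraction of the number of leaves $2^h = \Theta(|M_h|)$, the adversary can choose a coloring on which $D$ fails: it can present an isomorphic instance whose explored portions are color-disjoint (forcing a wrong ``non-isomorphic'' answer) while the true full trees are isomorphic. Hence $D$ must query $\Omega(2^h) = \Omega(\min\{|T_1|,|T_2|\})$ nodes.

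The main obstacle I anticipate is the adversary construction: I must ensure that whatever fixed leaf sets the deterministic algorithm probes, the oracle (acting adversarially) can answer colors consistently so that (i) the two explored trees are color-disjoint, forcing $D$ to conclude non-isomorphism or guess, yet (ii) there genuinely exists a completion of the colorings into two globally isomorphic trees in $\mathcal{M}_h$. The crux is that with a bijective coloring of $2^h$ distinct colors, if fewer than half the leaves are explored in each tree there is enough freedom to extend the partial coloring to a global isomorphism; so the adversary foils every deterministic strategy making $o(2^h)$ leaf queries. Combined with the fact that reaching even a single leaf in a binary $M_h$ already costs $h$ internal queries and there are $\Theta(2^h)$ nodes total, this yields the $\Omega(\min\{|T_1|,|T_2|\})$ bound, completing the proof.
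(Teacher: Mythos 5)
There is a genuine gap here, and in fact the conclusion your adversary argument aims for is false: a deterministic algorithm \emph{can} solve the problem on $\mathcal{M}_h$ with only $O(\sqrt{2^h}\,h)$ queries, so no linear lower bound can be extracted from the uniform trees $\mathcal{M}_h$ alone. Concretely, after one root-to-leaf walk reveals $h$, explore both trees down to level $h/2$, fully explore the $2^{h/2}$ leaves below a single level-$h/2$ node $v$ of $T_1$, and explore one leaf below each of the $2^{h/2}$ level-$h/2$ nodes of $T_2$; any isomorphism must map $v$ to some level-$h/2$ node of $T_2$, whose probed leaf then matches a leaf below $v$. This is exactly the deterministic balanced-split strategy of Algorithm~\ref{alg:deterministic_iso}, and the paper explicitly remarks right after the theorem that $\mathcal{M}_h$ is \emph{not} a worst case for deterministic algorithms. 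The technical reason your argument breaks is that the adversary must complete the partial coloring to an input in which the two trees are related by an \emph{automorphism of $M_h$} (an iterated wreath product of order $2^{2^h-1}$), not by an arbitrary bijection of the $2^h$ leaves. The expectation bound $P(h,a,b)\le ab/2^h$ of Lemma~\ref{lem:lower:bound:tree:homs} guarantees the existence of a collision-avoiding completion only when $ab<2^h$, i.e.\ when $m=O(\sqrt{2^h})$; beyond that threshold there are explored leaf sets $A$, $B$ of size $\Theta(\sqrt{2^h})$ (all leaves under one node versus a transversal of the subtrees at level $h/2$) such that $\sigma(A)\cap B\neq\varnothing$ for \emph{every} automorphism $\sigma$, and the adversary is stuck. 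So your approach can at best reprove the $\Omega(\sqrt{n})$ bound. (A smaller issue: a deterministic algorithm's queries are not predetermined, since later queries may depend on observed leaf colors; that part could be repaired along the lines of Lemma~\ref{lem:nonadaptive:suffices}.)

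The paper's proof is entirely different: it runs the deterministic algorithm on a hard instance from $\mathcal{M}_{2h}$, where Theorem~\ref{thm:lowerboundiso} (which applies in particular to deterministic algorithms) forces $\Omega(\sqrt{2^{2h}})=\Omega(2^h)$ queries, and then \emph{deletes all unexplored parts of the input trees}, assigning fresh colors where needed. The algorithm behaves identically on the truncated instance, which is still a valid pair of black box search trees but now has only $O(2^h)$ vertices, so the cost is linear in the instance size. The resulting hard instances are subtrees of trees in $\mathcal{M}_{2h}$ with leaves at many different depths --- precisely the feature that defeats the split strategy and that your uniform trees lack.
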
 \label{thm:lowerboundiso:determ}

\begin{proof}
Consider a deterministic algorithm on inputs from $\mathcal{M}_{2h}$, where~$h = \log (n)$ and~$n$ is a power of~$2$. By Theorem~\ref{thm:lowerboundiso} there are instances consisting of pairs of trees~$T_1$,~$T_2$ on which the algorithm makes~$\Theta(\sqrt{2^{2\log(n)}})=\Theta(n)$ queries in total. We know from the proof that the trees~$T_i$ can be chosen from~$\mathcal{M}_{2h}$.
For each~$i\in \{1,2\}$, remove from~$T_i$ all non-root vertices whose parents have not been explored (and thus who have not been explored either). Let~$T'_i$ be the resulting tree, respectively for each~$i$.  On the input pair~$(T'_1,T'_2)$ the algorithm behaves exactly the same as on~$(T_1,T_2)$ and thus also makes~$\Theta(n)$ queries in total, however~$T'_i$ has at most~$\mathcal{O}(n)$ vertices. This shows that on~$(T'_1,T'_2)$  the algorithm makes 
$\Omega(\min\{{|T'_1|}, {|T'_2|}\})$ queries.
\end{proof} 

Note that balanced splits for the trees of $\mathcal{M}_h$ can be found almost trivially: after finding out the height $h$ through a single walk, an arbitrary node at level $\frac{h}{2}$ will induce a balanced split.
This shows that while $\mathcal{M}_h$ constitutes worst-case examples for probabilistic algorithms, this is not true for deterministic algorithms. And indeed, our deterministic lower bounds applies subtrees of trees in $\mathcal{M}_h$ which have leaves on different levels.

\section{Motivation Behind the Model}\label{sec:motivation}
The motivation behind the specifics of our model lies in so-called \emph{individualization-refinement} algorithms, the prevailing method to solve the graph isomorphism problem in practice. We will explain that and why the isomorphism exploration problem captures the runtime of these algorithms, but we will not go into detail of how such algorithms work in general, and rather refer to~\cite{McKay:userguide,McKay201494}.

In fact, currently all practical state-of-the-art tools are based on the individualization-refinement paradigm. Invariably, these algorithms perform a type of backtracking procedure to explore the structure of input graphs. As usual with backtracking procedures, this leads to 
a \emph{search tree}. While work is performed by the algorithms in each node of the search tree, the dominating factor for the running time is the size of the tree itself. Specifically, the running time per node is almost linear, coming from the color-refinement algorithm (also known as 1-dimensional Weisfeiler-Leman Algorithm). However, the size of the search tree is exponential in the worst case~\cite{DBLP:conf/esa/NeuenS17}. 

When solving for isomorphisms of two graphs we get two search trees, one for each of the graphs. The leaves of the search trees correspond to complete invariants (described in more detail below), i.e., colors in our terminology. The task of finding isomorphisms in the graphs translates to finding pairs of leaves of equal color in the trees.

As mentioned earlier, nowadays there are various software packages implementing the paradigm in different flavors. These packages differ in many details (see below), of which the most crucial aspect is the traversal strategy through the tree.

\paragraph{The Search Tree.} While the problem definition, algorithms, and lower bounds in this paper do not require further knowledge on how IR-algorithms operate, we want to give some intuition about the composition of the search tree and in particular where the axiom regarding complete isomorphism invariance comes from.

The individualization-refinement search tree is the recursion tree of a backtracking procedure whose goal it is to analyze the structure of an input graph (two input graphs in case of the isomorphism problem).
Initially, forming the root of the tree, the algorithm distinguishes the vertices of the input graph using readily computed invariants. For example the vertices are easily distinguished by their degree, but other information is also used. The specific algorithm typically used for this is the color \emph{refinement algorithm} (also known as vertex classification or 1-dimensional Weisfeiler-Leman algorithm). 
In case all vertices are distinguished from another, isomorphism can easily be checked, and no recursion is needed.
Otherwise the algorithm starts to pick a class of indistinguishable vertices and for each vertex in this class, one at a time, 
artificially alters the vertex to make it distinguishable. This process is called \emph{individualization} (hence the name individualization-refinement algorithm). Each individualization causes a recursive call, which corresponds to a child of the current node in the search tree.
In the recursion, the refinement algorithm is called again to check if new information propagates through the graph and can be used to distinguish vertices further. As the entire procedure proceeds recursively it allows us to distinguish more and more vertices from each other. 
Recursion continues until all vertices have been distinguished from one another, i.e., the partition of the vertices into different types is \emph{discrete}. The leaves of the individualization-refinement search tree therefore correspond to discrete partitions.
To the leaves, because all vertices have been distinguished, we can associate an invariant that completely describes the structure. We call this a complete invariant. This invariant corresponds to the color of the leaf in our exploration model.

The defining property of these algorithms is that all computations are made in an isomorphism-invariant fashion. This is precisely what leads to our invariance axiom. Indeed, 
checking isomorphism between two leaves $l_1, l_2$ in the backtracking tree then becomes trivial since in each leaf all vertices have been distinguished from one another. In particular there can be only one possible isomorphism, which would have to map vertices of the same type to each other.
If an isomorphism exists this implies that the individualization choices made to get to $l_1$ can be mapped to the choices made to get to $l_2$ with the isomorphism.

The isomorphism-invariance of the entire procedure guarantees that for matching leaves there is an automorphism (isomorphism when considering trees from different input graphs) mapping one leaf to the other. In fact, the number of occurrences of each leaf color is exactly the order of the automorphism group of the input graph.

While, depending on the input, the backtracking trees can come in many shapes and sizes, we want to record several properties. For starters internal vertices cannot have only one child since there is no reason to individualize vertices in singleton classes. Furthermore the number of children a node in the search tree can have is certainly bounded by the number of vertices of the input graph. However, in practice most nodes have significantly fewer children.

The size of the search tree is the dominating factor of the running time. In fact, for most implementations the non-recursive work can 
be polynomially bounded. Thus, up to a polynomial factor, the running time agrees with the number of vertices of the search tree that are traversed (see for example \cite[Theorem 9]{thesis} and \cite{DBLP:journals/corr/abs-0804-4881}). In the worst case the size of the search tree and thus the running time of IR-algorithms is exponential in the size of the input~\cite{DBLP:conf/stoc/NeuenS18}.

\paragraph{Practical Heuristics.} A closer look at the practical tools reveals that these algorithms often do not have to traverse the entire tree. In particular there are two heuristics commonly applied to prune parts of the search tree. 
They are called \emph{invariant pruning} and \emph{automorphism pruning}. 
The reader familiar with isomorphism-testing tools may worry that our model omits the two crucial aspects of IR-algorithms. However, 
in the following, we explain why both of these pruning techniques are already captured by the exploration model we have given.

First of all, by using \emph{invariant pruning}, individualization-refinement algorithms are sometimes able to cut off parts of the search tree at inner nodes. This is done using invariants which yield different values at different parts of the tree. 
In principle this process could be emulated
in the black box search trees by adding colors to the inner nodes. 
When doing so, it is crucial that the invariance axiom does not apply to the inner nodes, since only for leaves the associated invariants are complete. We could extend all results from this paper to the adapted model.
In any case, it turns out that to some extent this kind of mechanism is already  captured in our model since it allows inner vertices to have different degrees. Degrees of vertices can serve to distinguish inner nodes exactly in the same fashion as invariants do.

Second of all, by using discovered automorphisms of the graph, \emph{automorphism pruning} is a method to skip branches that we already know are symmetric. Another way of seeing this is that automorphisms allow us to form quotients of the search tree. In our setting, we can simply define the trees to be the quotients of the original search trees. This simulates perfect automorphism pruning. It does not explain how to find one or all automorphisms in one of the trees, but these kinds of problems are closely linked to finding isomorphisms and can also be expressed in our search tree model.
Automorphism problems and their relation to our model are discussed further below.

In summary, both types of heuristics are captured by our model.

\paragraph{Traversal Strategies in Practical Tools.}
There are only two main traversal strategies used by competitive practical tools.
In fact, with the exception discussed below, all competitive tools essentially traverse the search tree using depth-first search \cite{Darga:2004:ESS:996566.996712,JunttilaKaski:ALENEX2007,DBLP:journals/jam/Lopez-PresaCA14,McKay201494}.  
As an all-purpose traversal strategy it allows for the discovery of automorphisms used for automorphism pruning as described above and invariant pruning can also be applied effectively.

However, the practical solver \Traces{} introduced a radically different strategy, which turns out to be much more effective in most practical cases:
breadth-first traversal is combined with random walks through the search trees \cite{McKay201494,DBLP:journals/corr/abs-0804-4881}.
The idea is that breadth-first traversal is used to maximize the applicability of pruning rules, while random walks are used to discover the automorphisms for automorphism pruning.
While crucially exploiting randomization, the tool does not make errors. In practice memory consumption is an issue for breath-first search, but newer versions of \Traces{} have the means for a better handling of memory.

\paragraph{Related Problems.}
Practical algorithms mostly do not decide the graph isomorphism problem, but rather solve one of two strongly related types of problems. 
In the following, we discuss these related problems and how they can be modeled using black box search trees.

First of all, the \emph{automorphism group problem} requires computation of the entire automorphism group of a given graph. This problem is closely related to the graph isomorphism problem and there are polynomial-time Turing reductions from each problem to the other~\cite{DBLP:journals/ipl/Mathon79}.
Regarding individualization-refinement search trees, in our model the problem corresponds to finding all leaves of some color chosen by the algorithm.

For randomized algorithms, another problem of interest is to find a \emph{uniform automorphism}, i.e., to compute an element of the automorphism group, but the outcome has to be uniformly distributed among all elements of the group. Repeatedly and independently finding uniformly random automorphisms allows us to generate the entire group and solve the automorphism group problem.
In our model this translates to finding two leaves of the same color in one tree so that the second leaf is independent of the first.

Alternatively, we can solve the \emph{asymmetry problem} or equivalently the problem of searching for a \emph{non-trivial graph automorphism}. In our model this translates to finding two distinct leaves of the same color if they exist.
While the asymmetry problem reduces to the isomorphism problem, a reduction in the other direction is not known. 
However, when it comes to IR-algorithms, it does suffice to solve the asymmetry problem. When non-trivial automorphisms are discovered, a wrapper algorithm may then remove the symmetry from the search tree and repeat the task on a quotient of the search tree. Such a wrapper algorithm will have to repeat the problem at most a number of times that is quasi-linear in the order of the graph. This is because the repetitions are directly tied to the length of the longest subgroup series of the automorphism group, which can only be quasi-linear in the order of the graph.

Overall, our results on traversal strategies immediately carry over to the various tasks regarding automorphism group computations.  

Another important problem in practice is \emph{canonization}. The goal here is essentially to find a normal form for graphs by finding a canonical ordering of the vertices.
This way isomorphism testing reduces to equality testing of the normal forms.
Hence, graph isomorphism reduces to canonization in polynomial-time, but we currently do not know whether graph isomorphism and canonization are polynomial-time equivalent.

In our context of IR-algorithms and their backtracking trees, this problem can be modeled as follows. The solvers are now executed on a single graph, yielding a single search tree. The goal is to return a particular leaf in the given input trees. The requirement is that the output has to be consistent across different inputs. That means that for different but isomorphic inputs, leaves of the same color have to be returned.

It is not clear to us whether any of the techniques for sublinear exploration developed in this paper can be transferred to the canonization problem.

\section{Conclusion and Future Work}
We designed an abstract model that resembles the backtracking behavior of IR-algorithms and proved bounds for various scenarios.
We want to stress the fact that the class of trees $\mathcal{M}_h$ used throughout the paper for lower bound constructions models actual recursion trees of the IR-algorithms. In fact the trees closely resemble those arising form so-called shrunken multipede graphs of \cite{DBLP:conf/esa/NeuenS17}, which form worst case inputs for all IR-algorithms. These recursion trees are in particular of degree at most 4 (actually 2 for various IR-algorithms) and have exponential size. In other words, our worst case lower bounds apply to instances stemming from true inputs to IR-algorithms.

Using our new insights we can explain why some of the strategies used by the currently fastest practical solver \Traces{} turn out to be highly efficient.
As discussed previously, \Traces{} uses breadth-first search intertwined with random walks of the search tree. 
In particular, this is often done in a cost balancing manner, such that the number of random walks is proportional to the cost of breadth-first search.
This in turn often leads to the automorphism group being found in time proportional to the square root of the search tree size.
For sophisticated pieces of software such as \Traces{}, the traversal strategy is of course not the only deciding factor when it comes to running time. 
However, generally, the experimental paths often enable \Traces{} to discover automorphisms much earlier than solvers solely utilizing deterministic depth-first traversal. Hence automorphisms are available more quickly for pruning.
Overall, in some sense, \Traces{} emulates some of the techniques described in our Monte Carlo algorithm. 

Interestingly, \Traces{} also sometimes uses some techniques of the Las Vegas algorithm we describe. Specifically 
 it performs splits in its ``special traversal'' strategy for automorphism groups (see \cite{McKay201494}). When \Traces{} detects a leaf on level $h$ with parent~$v$, in our terminology it executes the split $(v, h-1)$. Since many graphs in the benchmark suite of \cite{McKay201494} have search trees of height~$2$ or~$3$, \Traces{} in practice turns out to frequently perform splits that are fairly balanced. This results in significant speedups over other solvers (e.g., see runtime on combinatorial graphs with switched edges in \cite{McKay201494}). 

In subsequent work, we were able to show that an implementation of the Monte Carlo approach can indeed outperform state-of-the-art solutions for isomorphism testing in practice. Interestingly, beyond superior worst-case guarantees, the approach has further practical advantages that simplify its implementation over state-of-the-art tools \cite{alenexpaper}. 

Regarding future work, a theoretical question that remains is whether sublinear traversal strategies for the graph canonization problem are possible. Furthermore, the challenge remains to close the gap of logarithmic factors between our upper and lower bounds. Also one might want to address the fact that the size of the larger tree rather than the size of the smaller tree appears in the upper bound of the Las Vegas algorithm.

\bibliography{main}
\bibliographystyle{plain}

\end{document}